\newif\ifarxiv
\begin{document}
\title{Contextuality in distributed systems\thanks{Supported through an Australian Government Research Training Program Scholarship and Discovery Grant DP190102142 from the Australian Research Council (ARC).}}
%
%

\author{Nasos Evangelou-Oost\orcidID{0000-0002-8313-6127} \and
    Callum Bannister\orcidID{0000-0002-8799-054X} \and
    Ian J. Hayes\orcidID{0000-0003-3649-392X}}
%
\authorrunning{N. Evangelou-Oost et al.} 
%
\institute{The University of Queensland, St Lucia, Australia}
\maketitle              
\begin{abstract}
    We present a lattice of distributed program specifications, whose ordering represents implementability/refinement.
Specifications are modelled by families of subsets of \emph{relative} execution traces, which encode the local orderings of state transitions, rather than their absolute timing according to a global clock.
This is to overcome fundamental physical difficulties with synchronisation.
The lattice of specifications is assembled and analysed with several established mathematical tools.
Sets of nondegenerate cells of a simplicial set are used to model relative traces, presheaves model the parametrisation of these traces by a topological space of variables, and information algebras reveal novel constraints on program correctness.
The latter aspect brings the enterprise of program specification under the widening umbrella of \emph{contextual semantics} introduced by Abramsky et al.
In this model of program specifications, \emph{contextuality} manifests as a failure of a consistency criterion comparable to Lamport's definition of \emph{sequential consistency}.
The theory of information algebras also suggests efficient local computation algorithms for the verification of this criterion.
The novel constructions in this paper have been verified in the proof assistant \emph{Isabelle/HOL}.


\keywords{Information algebras \and Presheaves \and Refinement lattices.}

\end{abstract}
%
%
%


\newcommand{\proj}[2]{{#1}^{\downarrow{#2}}}

\newcommand{\projtup}[2]{{#1}_{\downarrow{#2}}}

\newcommand{\lin}[1]{\angles{#1}}

\newcommand{\bang}{{!}}

\newcommand{\trace}{\cotuple}

\newcommand{\deltad}{\Delta_\mathsf{sur}}

\newcommand{\deltaplus}{\Delta_+}

\newcommand{\maxcov}{\mathsf{Cov}_\mathsf{max}}

\newcommand{\ant}[1]{\mathsf{Ant}_\mathsf{#1}}

\newcommand{\chaos}{\fun{\Theta}}

\newcommand{\chaosalg}{\cat{R}_\chaos}

\newcommand{\lambdaalg}{\fun{\Lambda}}

\newcommand{\phialg}{\fun{\Phi}}

\newcommand{\modd}[1]{\mathrm{mod\;}#1}

\newcommand{\om}{\fun{\Omega}}

\newcommand{\augsset}{\cat{Set}_{\deltaplus}}

\newcommand{\unit}{[\,]}

\newcommand{\nerve}{\fun{N}}

\newcommand{\projop}{{\downarrow}}

\section{Introduction}

Lattices of sets of traces have been successful in the field of formal methods as algebraic models for program specification and verification.
For concurrent programs, two important examples are trace models of Concurrent Kleene Algebra~\cite{DBLP:journals/jlp/HoareSMSZ16} and Concurrent Refinement Algebra~\cite{DBLP:conf/fm/HayesCMWV16}.
A core advantage of these models is that they facilitate \emph{compositional reasoning}, which mitigates the inherent difficulties in analysing the exponential proliferation of program behaviours that occur when programs run in parallel.

These models do not explicitly account for the topological structure inherent to a distributed system, which can make reasoning about local behaviour, e.g. local variable blocks, cumbersome.
Moreover, trace models for concurrency often implicitly assume a \enquote{global clock} with which all traces progress in lockstep---and this assumption limits their applicability to systems distributed over significant distances in space, due to physical (e.g. from relativistic physics) constraints on synchronisation.

In this article we describe a lattice of program specifications, that encodes the possible behaviours of a distributed system as subsets of \emph{relative} traces, as well as its configuration into independent parallel processors.

Inspirations for this work are previous uses of presheaves in concurrency~\cite{DBLP:journals/mscs/Goguen92,DBLP:journals/iandc/JoyalNW96}, the aforementioned refinement algebras~\cite{DBLP:journals/jlp/HoareSMSZ16,DBLP:conf/fm/HayesCMWV16}, topological models of concurrency such as~\cite{DBLP:conf/dagstuhl/Porter06,DBLP:books/sp/FajstrupGHMR16}, and the diverse applications of valuation and information algebras~\cite{pouly2012generic}, their relationship to sheaves, and their associated notion of \emph{contextuality}~\cite{MR4024458,DBLP:phd/ethos/Caru19}.

In \cref{sec.traces} we introduce the notion of a \emph{relative trace}, which is a chain in a proset (preordered set) of states.
Such relative traces are assigned to each variable $v \in V$ in a distributed system, and are modelled as a subset of the cartesian product $\prod_{v \in V}\om_v$ of the prosets $\om_v$ of states for each variable.
The variables $v \in V$ are topologised in a space $(V,\cat{D})$ representing the physical configuration of the variables, and the inclusion ordering of open sets in this space induces a restriction action on the traces.
These traces are called \emph{relative} because this restriction action does not preserve the absolute timing of their states, but only the ordering of state transitions.

In \cref{sec.specs}, we introduce the notion of a \emph{specification}, which is a pair $(\fun{A}, \cat{U})$ where $\fun{A}$ is a presheaf on $\cat{D}$ whose values are subsets of possible relative traces, and $\cat{U}$ is a maximal cover of $(V,\cat{D})$, representing the distribution of the specification into independent asynchronous components.
Moreover, we explain that such specifications form a lattice $(\lambdaalg, \preceq)$ whose ordering represents the refinement relation between specifications.

In \cref{sec.info}, we define an \emph{information algebra} and some associated constructs.
We show how the lattice of specifications $\lambdaalg$ corresponds to a particular ordered, adjoint information algebra.
This associated information algebra permits the definition of \emph{local} and \emph{global consistency} for specifications, which we introduce in \cref{sec.consistency}.

In \cref{sec.dining}, we show how these consistency criteria can arise in a classical scenario in distributed systems, namely the \emph{dining philosophers}.

In the following we assume familiarity with the basic definitions of order theory and category theory, e.g. of a proset, a lattice, a category, a functor, a natural transformation, etc.
Possibly less familiar structures---topological spaces, presheaves, and (augmented) simplicial sets---are briefly reviewed.

\ifarxiv\else{
        Proofs of the results in this paper are available in the appendix of the version published on arXiv~\cite{https://doi.org/10.48550/arxiv.2210.09476}.
    }
\fi
Several of the constructions and proofs in this paper have been formalised in the \emph{Isabelle/HOL} proof assistant.\footnote{\url{https://github.com/onomatic/ramics23-proofs}}
\section{Relative traces}
\label{sec.traces}

A \defn{topological space} is an abstract model of a geometric space, but without built-in notions of angle, distance, curvature, etc.
It is formalised set-theoretically as a set $X$ of \defn{points}, and a set $\cat{T} \subseteq \fun{P} X$ of subsets of $X$, called \defn{open sets}, which are closed under unions and finite intersections\footnote{Consequently, $\cat{T}$ contains at least $\emptyset$ and $X$, being the union and intersection respectively of an empty family of open sets.}.
Roughly speaking, the open sets measure proximity, where nearby points occupy many open sets in common.

Let $V$ be a finite\footnote{Finiteness is not crucial, but it simplifies our presentation, and in real-world examples finiteness is a realistic assumption.} set of \defn{variables} of a distributed computer system and $\cat{D}$ a topology on $V$.
We call the open sets of $\cat{D}$ \defn{domains}.
We consider that a single computer is also a distributed system (as in~\cite{DBLP:journals/cacm/Lamport78}); then $V$ could be the set of memory locations over its CPU caches, RAM modules, hard disks, etc., and the topology $\cat{D}$ encodes the connectivity between these parts.
Or, $V$ could be the set of memory locations over a distributed database comprised of many individual computers, and $\cat{D}$ represents the network topology of this distributed system.

\subsubsection{The frame functor $\om$.}

To each variable $v \in V$ we associate a \emph{nonempty} proset $\om_v$ of \defn{states} whose order represents reachability or causality, and we extend this assignment to open sets $U \in \cat{D}$ by setting
\begin{equation}
    \om U \defeq \prod_{v \in U} \om_v
\end{equation}
where the right-hand side is a cartesian product of prosets (for which the ordering is given componentwise).
Moreover, each inclusion of open sets $U \subseteq U'$ induces a function $\om U' \to \om U$ by projection of tuples (i.e. function restriction).
Such restriction maps $\om U' \to \om U$ have the effect of discarding information involving variables outside $U$.

We evidently have that the function induced by $U \subseteq U$ is the identity, and if $U'' \subseteq U' \subseteq U$ then the functions $\om U \to \om U''$ and the composite of $\om U' \to \om U''$ and $\om U \to \om U'$ are equal---it does not matter whether we restrict tuples immediately to $U''$, or first restrict to $U'$ and then restrict to $U''$.
These assignments and properties are summarised in saying that $\om : \op{\cat{D}} \to \cat{Pro}$ is a contravariant functor from the posetal category $\cat{D}$ to the category of prosets, dubbed the \defn{frame functor}.

\subsubsection{The augmented simplicial nerve functor $\fun{N}$.}

A \defn{presheaf} is a contravariant functor valued in sets.
An \defn{augmented simplicial set}\footnote{An ordinary \defn{simplicial set} is a presheaf on the full subcategory $\Delta \subset \deltaplus$ consisting of only \emph{nonempty} linearly ordered posets.} is a presheaf $S : \op{\deltaplus} \to \cat{Set}$ whose domain is the \defn{augmented simplex category} $\deltaplus$.
This is the category with
\begin{itemize}
    \item objects as linearly ordered posets $\lin{n} \defeq \setc{i \in \Nn}{0 \leq i \leq n}$ for integers $n \geq 0$, as well as $\lin{-1} \defeq \emptyset$,
    \item morphisms as weakly monotone functions $f : \lin{n} \to \lin{m}$, i.e. those satisfying $i \leq j \implies fi \leq fj$.
\end{itemize}

Augmented simplicial sets are the objects of a category, denoted by $\augsset$, with natural transformations as morphisms.
It is conventional to write the application $S \lin{n}$ of an augmented simplicial set $S$ on $\lin{n} \in \deltaplus$ as $S_n$, and the application $Sf$ of $S$ on a morphism $f \in \deltaplus$ as $\contra{f}$.

For an augmented simplicial set $S \in \augsset$, an element $x \in S_n$ is called an \defn{$n$-cell of $S$}, or a \defn{cell of $S$ (of degree $n$)}.
The cell $x$ is \defn{degenerate} if there exists a non-injective function $f\in \deltaplus$ and a cell $y$ of $S$ with $x = \contra{f} y$.
Note that any cell of degree $-1$ is nondegenerate.

To any proset $\cat{P}$ we can produce an augmented simplicial set $\nerve\cat{P}$, called the \defn{augmented simplicial nerve of $\cat{P}$,} whose action on objects $\lin{n}$ for $n \in \Nn \cup \set{-1}$ is given by
\begin{equation}
    \sub{\nerve\cat{P}}{n} \defeq \homset{\cat{Pro}}{\lin{n}}{\cat{P}}
\end{equation}
where $ \homset{\cat{Pro}}{\lin{n}}{\cat{P}}$ is the set of monotone functions from $\lin{n}$ to $\cat{P}$, or equivalently the set of all \defn{chains of length $n$} in $\cat{P}$, including the empty chain, denoted $\unit$, which is said to have length $-1$.
Given a morphism $f : \cat{P} \to \cat{P}'$ in $\cat{Pro}$, we define $\nerve f : \nerve\cat{P} \to \nerve\cat{P}'$ as postcomposition with $f$, i.e.
\begin{equation}
    (x : \lin{n} \to \cat{P}) \mapsto (f \circ x : \lin{n} \to \cat{P}')
\end{equation}
This evidently defines a functor $\nerve : \cat{Pro} \to \augsset$.
Note that for $\cat{P} \in \cat{Pro}$ a proset, a nondegenerate $n$-cell of $\nerve\cat{P}$ is a chain of length $n$ in $\cat{P}$ \emph{that contains no repeated adjacent elements}.
In particular, the empty chain $\unit$ is nondegenerate.

For each domain $U \in \cat{D}$, $(\nerve \circ \om)U$ is an augmented simplicial set, such that for each $n \in \Nn \cup \set{-1}$, $\sub{(\nerve \circ \om)U}{n}$ is the set of all possible sequences of states in $\om U$, and the functions $\sub{(\nerve \circ \om)U}{n} \to \sub{(\nerve \circ \om)U}{m}$ are generated by \emph{mumbling} and \emph{stuttering} maps on traces~\cite{DBLP:journals/iandc/Brookes96} (i.e. maps that omit or repeat elements of a sequence, respectively).

\subsubsection{The nondegenerate cells functor $\fun{D}$.}

A basic result in the theory of simplicial sets is the following:
\begin{lemma}[{Eilenberg-Zilber~\cite[II.3.1, pp. 26-27]{MR0210125}}\footnote{The cited result is stated for ordinary simplicial sets, but it clearly also applies to augmented simplicial sets as there are no degeneracies of a cell of degree $-1$.}]
    \label{ch1.lem:eilenberg-zilber}
    For each cell $x$ of an augmented simplicial set $S$ there exists a unique nondegenerate cell $x'$ such that there exists a unique surjection $f_x \in \deltaplus$ with $x= \contra{f_x}x'$.
\end{lemma}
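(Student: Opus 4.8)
The plan is to reduce everything to two standard structural facts about the augmented simplex category $\deltaplus$, which I would record first. (i) \emph{Epi--mono factorization:} every morphism $f \in \deltaplus$ factors as $f = \iota \circ e$ with $e$ a surjection and $\iota$ an injection, uniquely. (ii) \emph{Surjections split:} every surjection $g : \lin{n} \to \lin{k}$ admits a monotone section $s$ with $g \circ s = \mathrm{id}$, since for $k \geq 0$ the fibres of a monotone surjection are nonempty consecutive intervals and one may take $s(i) = \min g^{-1}(i)$ (the only surjection with target $\lin{-1} = \emptyset$ is the identity). I will use throughout that $S$ is contravariant, so $\contra{(h \circ g)} = \contra{g} \circ \contra{h}$, and that, combining the definition of degeneracy with (i), a cell is nondegenerate exactly when it is \emph{not} of the form $\contra{e} z$ for any non-identity surjection $e$ (a non-injective map has non-identity epi part, and conversely).

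For \emph{existence}, I would argue by minimization. Given $x \in S_n$, consider all factorizations $x = \contra{g} w$ with $g \in \deltaplus$ a surjection and $w$ a cell; the trivial factorization $x = \contra{\mathrm{id}} x$ shows this collection is nonempty, and since a surjection $\lin{n} \to \lin{k}$ forces $k \leq n$, the target degrees are bounded below, so a factorization with $\deg w$ minimal exists. This $w$ must be nondegenerate: if $w = \contra{h} u$ for a non-identity surjection $h$, then $x = \contra{g}\contra{h} u = \contra{(h \circ g)} u$ with $h \circ g$ a surjection onto a strictly smaller object, contradicting minimality. Setting $x' \defeq w$ and $f_x \defeq g$ gives the required pair.

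For \emph{uniqueness}, suppose $x = \contra{g} w = \contra{g'} w'$ with $w, w'$ nondegenerate and $g : \lin{n} \to \lin{k}$, $g' : \lin{n} \to \lin{k'}$ surjections. Picking a section $s$ of $g$ and using $g \circ s = \mathrm{id}$ gives $w = \contra{s}\contra{g} w = \contra{s} x = \contra{(g' \circ s)} w'$. Factoring $g' \circ s = \iota \circ e$ with $e$ a surjection yields $w = \contra{e}(\contra{\iota} w')$, so nondegeneracy of $w$ forces $e = \mathrm{id}$; hence $g' \circ s$ is injective and $k \leq k'$. The symmetric argument using a section of $g'$ gives $k' \leq k$, so $k = k'$ and $g' \circ s$ is a monotone injection of $\lin{k}$ into itself, i.e.\ the identity; therefore $w = \contra{\mathrm{id}} w' = w'$. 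Finally, this same computation shows every section of $g$ is also a section of $g'$ and vice versa, so $g$ and $g'$ have the same set of monotone sections; since each fibre is recovered from these via $g^{-1}(i) = \{\, s(i) : s \text{ a section of } g \,\}$ (every element of the interval $g^{-1}(i)$ is realized as some section's value at $i$), the two surjections have identical fibres and $g = g'$. The case $n = -1$ is immediate, as every $(-1)$-cell is nondegenerate and the only surjection involved is the identity.

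The main obstacle is the second half of uniqueness: the section computation cleanly pins down the nondegenerate cell $w = w'$, but recovering $g = g'$ (and not merely $w$) requires the combinatorial observation that a monotone surjection in $\deltaplus$ is reconstructible from its set of sections. Existence, by contrast, is a routine minimization. The remaining care is bookkeeping: tracking the contravariance of $S$ so that composites of surjections land on the correct side, and separately handling the augmented boundary object $\lin{-1} = \emptyset$.
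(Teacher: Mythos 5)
The paper does not prove this lemma at all: it is imported verbatim from Gabriel--Zisman (the cited \enquote{Eilenberg--Zilber} lemma), with only a footnote observing that the augmented case adds nothing since $(-1)$-cells admit no degeneracies. Your proof is a correct, self-contained reconstruction of the standard argument, and every step checks out. The two preliminaries (unique epi--mono factorisation in $\deltaplus$, and the splitting of surjections via $s(i) = \min g^{-1}(i)$, which is monotone because fibres are consecutive intervals) are exactly the right tools; the reconciliation of the paper's \enquote{non-injective $f$} definition of degeneracy with \enquote{non-identity surjection $e$} is needed and correctly handled, since a non-identity monotone surjection of finite linear orders is automatically non-injective. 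The existence-by-minimisation and the section trick $w = \contra{s}x = \contra{(g'\circ s)}w'$ for uniqueness of the nondegenerate cell are standard. The part you rightly flag as the real content---recovering $g = g'$ and not merely $w = w'$---is also sound: once $k = k'$, every section of $g$ is a section of $g'$ and conversely, and because any choice function on the interval fibres of a monotone surjection is itself a monotone section, the fibres satisfy $g^{-1}(i) = \{\,s(i) : s \text{ a section}\,\}$, so equal section sets force equal fibres and hence $g = g'$. The degenerate boundary case $n = -1$ is dispatched correctly. In short: the paper buys the result by citation; you have supplied a complete and correct proof of it.
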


For any augmented simplicial set $S$ we can produce a plain set $\fun{D} S$ consisting of only the nondegenerate cells of $S$ (as in~\cref{ch1.lem:eilenberg-zilber}).
Moreover, we can extend this assignment to morphisms of $\augsset$,
\begin{align}
    \fun{D} & : \augsset \to \cat{Set}                        \\
    S       & \mapsto \text{set of nondegenerate cells of } S \\
    \alpha  & \mapsto (x \mapsto (\alpha x)')
\end{align}

In other words, the action of $\fun{D}$ on morphisms of augmented simplicial sets $\alpha : S \to S'$ gives a function $\fun{D}\alpha : \fun{D}S \to \fun{D}S'$ that sends a nondegenerate cell $x$ of $\fun{D}S$ to the nondegenerate cell $(\alpha x)'$ of $S'$ that generates $\alpha x$, whose existence and uniqueness are assured by~\cref{ch1.lem:eilenberg-zilber}.
These assignments assemble to a functor.

\begin{lemma}\ifarxiv [proof in~\cref{lem.nondegen_functor_proof}] \fi
    \label{lem.nondegen_functor}
    $\fun{D}$ is a functor.
\end{lemma}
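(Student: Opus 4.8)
The plan is to verify the two functor axioms directly, with the Eilenberg--Zilber decomposition of \cref{ch1.lem:eilenberg-zilber} as the central tool. Throughout, for a cell $y$ of an augmented simplicial set I write $y'$ for its associated nondegenerate cell and $f_y$ for the unique surjection of $\deltaplus$ with $y = \contra{f_y}y'$, and I abbreviate $\alpha_n(x)$ to $\alpha x$ as in the definition of $\fun{D}$. First I would check that $\fun{D}\alpha : \fun{D}S \to \fun{D}S'$ is well defined: for $x \in \fun{D}S$ the cell $(\alpha x)'$ is nondegenerate by construction, hence lies in $\fun{D}S'$, and it is uniquely determined by $\alpha x$ by the uniqueness clause of \cref{ch1.lem:eilenberg-zilber}. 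Preservation of identities is then immediate: a nondegenerate cell $x$ admits the decomposition $x = \contra{\mathrm{id}}\,x$ with $\mathrm{id}$ a surjection and $x$ nondegenerate, so uniqueness forces $x' = x$; therefore $\fun{D}(\mathrm{id}_S)(x) = (\mathrm{id}_S\,x)' = x' = x$, and $\fun{D}(\mathrm{id}_S) = \mathrm{id}_{\fun{D}S}$.

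The substantive step is preservation of composition, $\fun{D}(\beta \circ \alpha) = \fun{D}\beta \circ \fun{D}\alpha$ for $S \xrightarrow{\alpha} S' \xrightarrow{\beta} S''$. Unwinding the definitions, for $x \in \fun{D}S$ the left-hand side evaluates to $(\beta(\alpha x))'$ and the right-hand side to $(\beta((\alpha x)'))'$. Setting $y = \alpha x$, the whole statement reduces to the key claim that for every cell $y$ of $S'$,
\[
    (\beta y)' = (\beta(y'))'.
\]
I would prove this by decomposing $y = \contra{f_y}y'$ and applying naturality of $\beta$ at the morphism $f_y : \lin{n} \to \lin{m}$, which yields $\beta y = \contra{f_y}(\beta(y'))$ in $S''$. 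Decomposing $\beta(y') = \contra{g}\,(\beta(y'))'$ by Eilenberg--Zilber and using the contravariant functoriality of $S''$ gives $\beta y = \contra{f_y}\,\contra{g}\,(\beta(y'))' = \contra{g \circ f_y}\,(\beta(y'))'$. Since $g \circ f_y$ is a composite of surjections, hence a surjection, and $(\beta(y'))'$ is nondegenerate, the uniqueness clause of \cref{ch1.lem:eilenberg-zilber} identifies $(\beta(y'))'$ as the nondegenerate cell generating $\beta y$, i.e.\ $(\beta y)' = (\beta(y'))'$, which is the claim.

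I expect the only real difficulty to be bookkeeping with variance: fixing that $f_y : \lin{n} \to \lin{m}$ induces $\contra{f_y} = S'(f_y) : S'_m \to S'_n$, applying naturality at the correct degree, and composing the surjections in the right order so that contravariant functoriality yields $\contra{g \circ f_y}$ rather than $\contra{f_y \circ g}$. There is no genuine estimate or construction involved; once the directions are pinned down, the uniqueness half of Eilenberg--Zilber does all of the work, and the same argument applies uniformly to cells of degree $-1$ since those are always nondegenerate.
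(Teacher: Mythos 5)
Your proposal is correct and follows essentially the same route as the paper's proof: both reduce composition-preservation to expressing $\beta(\alpha x)$ as a degeneracy of a nondegenerate cell in two ways, using naturality of $\beta$, closure of surjections in $\deltaplus$ under composition, and the uniqueness clause of \cref{ch1.lem:eilenberg-zilber}. Your explicit treatment of well-definedness and the identity axiom is slightly more detailed than the paper's, but the substance is identical.
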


Postcomposing $\nerve \circ \om$ by $\fun{D}$, we obtain the presheaf
\begin{equation}
    \chaos \defeq \fun{D} \circ \nerve \circ \om : \op{\cat{D}} \to \cat{Set}
\end{equation}
that we call \defn{chaos}.

\begin{definition}[relative trace]
    For $U \in \cat{D}$, an element $t \in \chaos U$ is a \defn{$U$-relative trace}, or for short, a \defn{$U$-trace} or simply a \defn{trace}.
\end{definition}

\note{
    We use the standard shorthand for restriction maps of a presheaf $F : \opcat{\cat{T}} \to \cat{Set}$ (e.g. $F = \chaos$) on a topological space $(X,\cat{T})$; $\restrict{t}{U} \defeq (F i)t$ for $t \in F U'$ and $U,U' \in \cat{D}$ with $U' \subseteq U$ and $i$ the unique morphism $i : U \to U'$.
}

\note{
    \label{note.matrix}
    By fixing a linear ordering of $V$, a $U$-trace $t \in \chaos U$ for some $U \in \cat{D}$ can be represented by a unique matrix with rows labelled in increasing order by the $v \in U$, and columns indexed by \enquote{time}, with the property that adjacent columns of the matrix are always distinct (the empty trace $\unit \in \chaos U$ is therefore represented by the unique matrix with $\abs{U}$ rows and zero columns.).
    This property could be restated as saying that the traces $t \in \chaos U$ do not contain \emph{stutterings}~\cite{DBLP:journals/iandc/Brookes96}.
    The qualifier \enquote{relative} applied to \enquote{traces} emphasises the latter property, which entails that a trace only records the relative ordering of its states, rather than their absolute timing according to an implied \enquote{global clock}.
    This is illustrated in the example below.
}
\begin{example}
    \label{ex.trace}
    Let $(V,\fun{P} V)$ be the discrete space on a set $V \defeq \set{a,b}$ of variables, and $\om_a \defeq \set{a_0, a_1}$ and $\om_b \defeq \set{b_0, b_1}$ their corresponding prosets of states, both with the \emph{total ordering}, i.e. for which all pairs of elements are related.
    The trace
    \begin{equation}
        t =
        \begin{bmatrix}
            a_0 & a_0 & a_1 \\
            b_0 & b_1 & b_1
        \end{bmatrix}
        \in \chaos V
    \end{equation}
    informally corresponds to an ordered set of observations of the system, where the state transition $a_0 \leadsto a_1$ is observed after the transition $b_0 \leadsto b_1$.
    On the other hand, the trace
    \begin{equation}
        t' =
        \begin{bmatrix}
            a_0 & a_1 \\
            b_0 & b_1
        \end{bmatrix}
        \in \chaos V
    \end{equation}
    corresponds to a discretely ordered set of observations where neither $a_0 \leadsto a_1$ is observed before $b_0 \leadsto b_1$ or vice versa (although we refrain from saying they are simultaneous/synchronous).
    Moreover, we have
    \begin{equation}
        \restrict{t}{\set{a}} = \restrict{t'}{\set{a}} =
        \begin{bmatrix}
            a_0 & a_1
        \end{bmatrix}
        \in \chaos \set{a}
    \end{equation}
    It is for this reason we refer to such traces as \emph{relative}, because only the relative ordering of states is preserved under restriction maps: the transition $a_0 \leadsto a_1$ in $t$ was at \enquote{$\code{time}=2$} but in $\restrict{t}{\set{a}}$ it is at \enquote{$\code{time}=1$}.
\end{example}

\section{Specifications}
\label{sec.specs}

A \defn{cover} of a topological space $(X,\cat{T})$ is a family $\cat{U} = \set{U_i}_i$ of open sets whose union $\bigcup_i U_i$ equals $X$.
A \defn{maximal cover} is a cover that is also an \defn{antichain}, meaning $U_i \subseteq U_j$ if and only if $i = j$.
We call a maximal cover of $(V,\cat{D})$ a \defn{context}, where each $U_i \in \cat{U}$ is a \defn{domain} (of an independent process of the distributed system).

A \defn{subpresheaf} $\fun{G}$ of a presheaf $\fun{F} : \op{\cat{C}} \to \cat{Set}$ on a category $\cat{C}$, written $\fun{G} \subseteq \fun{F}$, is a family of subsets $\fun{G} X \subseteq \fun{F} X$ for each object $X \in \cat{C}$ that assemble to a presheaf, where $\fun{G} $ inherits the action of $\fun{F}$ on morphisms $X' \to X$ (i.e. by function restriction).

\begin{definition}[specification]
    A pair $(\fun{A} ,\cat{U})$, where $\fun{A} \subseteq \chaos$ is a subpresheaf of $\chaos$, and $\cat{U} = \set{U_i}_i$ is a context, is called a \defn{specification}.
\end{definition}

The first factor of a specification records the possible relative execution traces of a distributed system, and the second defines the domains of the independent asynchronous processes that make up the system.
Specifications are partially ordered, where
\begin{equation}
    (\fun{A} ,\cat{U}) \preceq (\fun{B} , \cat{W})
\end{equation}
if and only if both $\fun{A} \subseteq \fun{B}$ and $\cat{U}$ \defn{refines} $\cat{W}$, meaning every open $U \in \cat{U}$ is contained in some $W \in \cat{W}$.
This ordering represents \emph{implementation} (or \emph{refinement}) of specifications: $(\fun{A},\cat{U}) \preceq (\fun{B}, \cat{W})$ means the left-hand side \emph{implements} (or \emph{refines}) the right-hand side.

\begin{example}
    Let the set of variables $V \defeq \set{a,b}$ equipped with the discrete topology, and let the subpresheaf $\fun{A}$ be defined by $\fun{A} \emptyset \defeq \set{\unit}$, $\fun{A} \set{a} \defeq \set{ \begin{bmatrix}
                a_0 & a_1
            \end{bmatrix}}$, and $\fun{A} \set{b} \defeq \fun{A} V \defeq \emptyset$, then it holds that $\fun{A}$ with the context $\set{\set{a}, \set{b}}$ refines chaos with the \defn{trivial context} $\set{V}$, i.e.
    $
        (\fun{A}, \set{\set{a},\set{b}}) \preceq (\chaos, \set{V})
    $.
\end{example}
Note that refinement in the first factor represents \emph{reduction of nondeterminism}, whereas in the second factor it is \emph{increase of parallelism}.

The subpresheaves of $\chaos$ form a complete distributive lattice\footnote{Actually, a \emph{complete bi-Heyting algebra}~\cite[Cor. 9.1.13]{reyes2004generic}.} $\subs{\chaos}$~\cite[\S III.8 Prop. 1]{MR1300636}\footnote{The cited result is stated more generally for the lattice of sub\emph{sheaves} of a given \emph{sheaf} over a \emph{site}.  Here we take the \emph{trivial site}, over which sheaves are equivalent to presheaves.} with meet and join given by pointwise intersection and union, i.e.
\begin{equation}
    (\fun{A} \cap \fun{B})U  = \fun{A} U \cap \fun{B} U,\qquad
    (\fun{A} \cup \fun{B})U  = \fun{A} U \cup \fun{B} U
\end{equation}

\begin{theorem}\ifarxiv [proof in~\cref{thm.covers_proof}] \fi
    \label{thm.covers}
    The set of maximal covers $\maxcov{\cat{T}}$ of a space $(X,\cat{T})$ with the refinement ordering $\leq$ described above, forms a complete distributive lattice, with meet and join given for all $\cat{U}, \cat{W} \in \maxcov{\cat{T}}$ respectively by
    \begin{align}
        \label{eq.antichain_meet}
        \cat{U} \land \cat{W} & = \setc{U \in L_\cat{U} \cap L_\cat{W}}{\not\te{ V \in L_\cat{U} \cap L_\cat{W}}{ U \subset V}} \\
        \label{eq.antichain_join}
        \cat{U} \lor \cat{W}  & = \setc{U \in \cat{U} \cup \cat{W}}{\not\te{V \in \cat{U} \cup \cat{W}}{U \subset V}}
    \end{align}
    where
    $
        L_\cat{U} \defeq
        \setc{U \in \cat{T}}{\te{V \in \cat{U}}{U \subseteq V}}
    $.
\end{theorem}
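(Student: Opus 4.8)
The plan is to reduce the entire statement to a well-understood fact about down-sets. To each maximal cover $\cat{U}$ I associate the set $L_\cat{U}$ already defined in the statement; note that $L_\cat{U}$ is exactly the downward closure of $\cat{U}$ in the poset $(\cat{T},\subseteq)$, so it is a down-set all of whose members still union to $X$ (call such a down-set \emph{covering}). First I would show that $\cat{U}\mapsto L_\cat{U}$ is an order isomorphism from $(\maxcov{\cat{T}},\le)$ onto the covering down-sets ordered by $\subseteq$, with inverse $D\mapsto\max D$ (the antichain of $\subseteq$-maximal members of $D$). The two directions $\max L_\cat{U}=\cat{U}$ and $\downarrow(\max D)=D$ both rely on the fact that every open set lying in a covering down-set sits below a maximal one; this is where the finiteness of $V$ (hence of $\cat{T}$) is used. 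That refinement corresponds to inclusion is immediate: $\cat{U}\le\cat{W}$ says every member of $\cat{U}$ lies inside a member of $\cat{W}$, which is precisely $L_\cat{U}\subseteq L_\cat{W}$.

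Next I would transport the lattice structure across this isomorphism. The covering down-sets form a sublattice of the lattice of all down-sets of $(\cat{T},\subseteq)$ under $\cap$ and $\cup$: closure under $\cup$ is trivial, and closure under $\cap$ follows because for any $x\in X$ witnessed by some $U\in D_1$ and $W\in D_2$, the open set $U\cap W$ lies in $D_1\cap D_2$ and still contains $x$ (finite intersections of opens are open). Distributivity is then inherited verbatim from the distributive lattice of sets through the injection $\cat{U}\mapsto L_\cat{U}$. Finally, translating $\cup$ and $\cap$ back along $\max$ produces exactly the displayed formulas, since $\max(L_\cat{U}\cup L_\cat{W})=\max(\cat{U}\cup\cat{W})$ gives \cref{eq.antichain_join} and $\max(L_\cat{U}\cap L_\cat{W})$ gives \cref{eq.antichain_meet}; one then checks directly that these are the meet and join for $\le$ (each is a common lower, respectively upper, bound, and any other such bound refines into it via the maximality clause).

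The hard part will be completeness. Arbitrary joins cause no trouble, since an arbitrary union of covering down-sets is again a covering down-set. The delicate case is arbitrary meets: the intersection $\bigcap_i D_i$ of covering down-sets is a down-set but need not be covering, because a point's witnessing opens drawn from infinitely many $D_i$ need not have open intersection. This is exactly the step where I would invoke finiteness: with $\cat{T}$ finite, $\maxcov{\cat{T}}$ is finite, so the bounded distributive lattice obtained above is automatically complete; equivalently, finiteness guarantees a least covering down-set (generated by the minimal open neighbourhoods), which serves as a bottom element and forces every intersection $\bigcap_i D_i$ to remain covering. For genuinely infinite, non-Alexandrov spaces this can fail, so I would state the completeness clause in the finite setting that the paper has already adopted for $V$.
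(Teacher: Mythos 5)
Your proof is correct, and it reaches the result by a more self-contained route than the paper. The paper's proof simply cites the standard fact that antichains in a finite poset form a complete distributive lattice under the displayed operations (as a consequence of the fundamental theorem of distributive lattices), and then only verifies that the subset of antichains which are covers is closed under those two operations --- the essential point being, exactly as in your argument, that for a point $x$ witnessed by $U \in \cat{U}$ and $V \in \cat{W}$, the finite intersection $U \cap V$ is open and lies in $L_\cat{U} \cap L_\cat{W}$, hence sits below some maximal element of it. You instead unpack the cited black box: you build the order isomorphism $\cat{U} \mapsto L_\cat{U}$ onto covering down-sets (with inverse $\max$), exhibit these as a sublattice of the down-set lattice, and transport everything back. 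What your version buys is that distributivity, the explicit formulas for $\land$ and $\lor$, and completeness all become transparent consequences of the powerset lattice rather than separate citations; it also sidesteps a slight imprecision in the paper's join case (the paper asserts that an element $U$ incomparable with its partner $V$ already lies in $\cat{U} \lor \cat{W}$, when in general one must pass to a maximal element of $\cat{U} \cup \cat{W}$ above $U$, which your down-set formulation handles automatically). Your explicit discussion of where finiteness enters --- both for $\downarrow(\max D) = D$ and for completeness of arbitrary meets --- is also more careful than the paper's, which leaves completeness of the sub-poset of covers implicit in the citation. The cost is length; the mathematical content of the key closure step is the same in both.
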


The set of specifications $\lambdaalg$ is then defined as the cartesian product of the distributive lattices,
\begin{equation}
    \lambdaalg \defeq
    \subs{\chaos} \times \maxcov{\cat{D}}
\end{equation}
where $\maxcov{\cat{D}}$ is the lattice of maximal covers of $(V, \cat{D})$, and this is again a complete distributive lattice~\cite[p. 12]{MR598630}, with meet and joint defined pointwise, i.e.
\begin{align}
    (\fun{A}, \cat{U}) \land (\fun{B}, \cat{W}) & = (\fun{A} \cap \fun{B}, \cat{U} \land \cat{W}) \\
    (\fun{A}, \cat{U}) \lor (\fun{B}, \cat{W})  & = (\fun{A} \cup \fun{B}, \cat{U} \lor \cat{W})
\end{align}
for $(\fun{A}, \cat{U}), (\fun{B}, \cat{W}) \in \lambdaalg$.
Informally, the meet of two specifications in $(\lambdaalg, \preceq)$ is the specification that contains all behaviours common to both while increasing parallelism to the minimal extent, whereas the join of two specifications is the specification containing the union of their behaviours while decreasing parallelism to the maximal extent.

The lattice $(\lambdaalg, \preceq)$ has as top element
$
    \top = (\chaos, \set{V}),
$
and as bottom element
$
    \bot = (\emptyset, \bigwedge \maxcov{\cat{D}}),
$
where $\emptyset$ is the constant functor with $\emptyset U = \emptyset$ and with $\emptyset i = 1_\emptyset$ the identity function on the empty set, for all open sets $U \in \cat{D}$ and inclusions $i : U \subseteq U'$.
Note that the meet $\bigwedge \maxcov{\cat{D}}$ over all contexts in $\maxcov{\cat{D}}$ exists because $\cat{D}$ is finite; this is the \defn{finest} context of $\cat{D}$.

Our goal in the next subsection is to show that the specifications of $(\lambdaalg, \preceq)$ can be profitably analysed through a structure known as an \emph{information algebra}.

\section{Information algebras}

\label{sec.info}

An \emph{information algebra} is an algebraic structure modelling information parameterised over a lattice of domains, together with \emph{combination} and \emph{projection} operators.
These specialise the \emph{valuation algebras} introduced by Shenoy~\cite{DBLP:journals/ijar/Shenoy89}.
Our use of information algebras is motivated by the theory of \emph{contextual semantics} developed in~\cite{MR4024458,DBLP:phd/ethos/Caru19}.
However,~\cite{MR4024458,DBLP:phd/ethos/Caru19} assume a discrete topology, whereas we prefer to allow arbitrary finite topological spaces to make a closer connection between the mathematical model of a distributed system and its physical topological configuration.
Therefore, in the following we mildly generalise the definitions and results of~\cite{MR4024458,DBLP:phd/ethos/Caru19} to arbitrary finite topological spaces.

\begin{definition}[information algebra]
    Let $(X,\cat{T})$ be a topological space over a finite set of variables $X$.
    An \defn{information algebra} over $\cat{T}$ is a quintuple $(\phialg, \cat{T}, \dd, \projop, \otimes)$, where $\phialg$ is a set, $\dd$ a function, $\otimes$ a binary operation, and $\projop$ a partially defined operation,
    \begin{enumerate}
        \item \emph{Labelling}: $\dd : \phialg \to \cat{T}, \phi \mapsto \dd \phi$,
        \item \emph{Projection}: $\projop : \phialg \times \cat{T} \to \phialg, (\phi, U) \mapsto \proj{\phi}{U}$, defined for all $U \subseteq \dd \phi$,
        \item \emph{Combination}: $- \otimes - : \phialg \times \phialg \to \phialg, (\phi,\psi) \mapsto \phi \otimes \psi$,
    \end{enumerate}
    such that the following properties (explained below) hold,
    where for $U \in \cat{T}$, $\phialg_U \defeq \setc{\phi \in \phialg}{\dd \phi = U}$, and where $\phi, \psi \in \phialg$:
    \begin{enumerate}
        \item[(I1)] \emph{Commutative semigroup}: $(\phialg,\otimes)$ is associative and commutative.
        \item[(I2)] \emph{Projection}: given $U \subseteq \dd \phi$,
            $
                \dd (\proj{\phi}{U}) = U
            $.
        \item[(I3)] \emph{Transitivity}: given $W \subseteq U \subseteq \dd \phi$, $
                \proj{
                    (\proj{\phi}{U})
                }{W} = \proj{\phi}{W}
            $.
        \item[(I4)] \emph{Domain}:
            $
                \proj{\phi}{\dd \phi} = \phi
            $.
        \item[(I5)] \emph{Labelling}:
            $
                \dd (\phi \otimes \psi) = \dd \phi \cup \dd \psi
            $.
        \item[(I6)] \label{axiom.combination} \emph{Combination}: for $U \defeq \dd \phi $, $W \defeq \dd \psi$ and $Q \in \cat{T}$ such that $U \subseteq Q \subseteq U \cup W$, we have
            $
                \proj{(\phi \otimes \psi)}{Q}
                = \phi \otimes \proj{\psi}{Q \cap W}
            $.
        \item[(I7)] \emph{Neutrality}: for each $U \in \cat{T}$, there exists a \emph{neutral element} $1_U \in \phialg_U$ such that $
                \phi \otimes 1_U
                = 1_U \otimes \phi
                = \phi
            $
            for all $\phi \in \phialg_U$.
            Moreover, these neutral elements satisfy
            $
                1_U \otimes 1_W
                = 1_{U \cup W}
            $
            for all $U, W \in \cat{T}$.
        \item[(I8)] \emph{Nullity}: for each $U \in \cat{T}$, there exists a \emph{null element} $0_U \in \phialg_U$ such that
            $
                \phi \otimes 0_U
                = 0_U \otimes \phi
                = 0_U
            $.
            Moreover, for all $U, W \in \cat{T}$ with $W \subseteq U$ and $\phi \in \phialg_U$, these null elements satisfy
            $
                \proj{\phi}{W} = 0_W \iff \phi = 0_U
            $.
        \item[(I9)] \emph{Idempotence}:
            For all $U \subseteq \dd \phi$, it holds that
            $
                \phi \otimes \proj{\phi}{U}
                = \phi
            $.
    \end{enumerate}
\end{definition}

The elements $\phi \in \phialg$ of an information algebra $(\phialg, \cat{T}, \dd, \projop, \otimes)$ are called \defn{valuations}.
An element $U \in \cat{D}$ is called a \defn{domain}.
The \defn{domain of a valuation} $\phi$ is the set $ \dd \phi \in \cat{D}$.

Some explanation for these axioms may be helpful.
Axiom (I1) says the order in which information is combined is irrelevant.
Axioms (I2)--(I4) essentially say that the triple $(\phialg,\dd, \projop)$ defines the structure of a presheaf (see~\cref{note.prealgebra} below).
(I5) is clear.
(I6) is the subtlest of the axioms; it says that to add a new piece of information, we can first strip its irrelevant parts.
This turns out to be crucial in developing efficient computational algorithms~\cite{DBLP:books/daglib/0008195,pouly2012generic}.
(I7) posits neutral elements, that contain \enquote{irrelevant} information, in the sense that combining with them adds nothing new, whereas (I8) posits null elements of \enquote{destructive} or \enquote{contradictory} information, that \enquote{corrupt} any other information combined with them.
(I9) distinguishes information algebras from their more general cousins, \emph{valuation algebras}, and is \enquote{the signature axiom of qualitative or logical, rather than quantitative, e.g. probabilistic, information. It says that counting how many times we have a piece of information is irrelevant}~\cite{MR4024458}.

\note{
    \label{note.prealgebra}
    Any information algebra $(\phialg, \cat{T}, \dd, \projop, \otimes)$ determines a presheaf $\phialg : \op{\cat{T}} \to \cat{Set}$, defined on objects $U \in \cat{T}$ by $\phialg U \defeq \phialg_U$, and such that if $W \subseteq U$ we have an action of restriction defined by projection, i.e. $\restrict{\phi}{W} \defeq \proj{\phi}{W}$, for $\phi \in \phialg U$.
    This presheaf is called the \defn{prealgebra associated to the information algebra $\phialg$}.
    We sometimes use this without mention.
}

Information algebras may often be enriched with a partial ordering on valuations, enabling the relative quantification of their information content~\cite{DBLP:journals/ijar/Haenni04}.

\begin{definition}[ordered information algebra]
    \label{def.ordered_info_alg}
    Let $\phialg$ be an information algebra on a space of variables $(X,\cat{T})$.
    Then $(\phialg,\cat{T},\dd, \projop, \otimes,\leq)$ is an \defn{ordered information algebra} if and only if $\leq$ is a partial order on $\phialg$ such that the following axioms hold:
    \begin{enumerate}
        \item[(O1)] \emph{Partial order}: for all $\phi, \psi \in \phialg$, $\phi \leq \psi$ implies $\dd \phi = \dd \psi$.
            Moreover, for every $U \in \cat{T}$ and $\Psi \subseteq \phialg_U$, the infimum $\Inf{\Psi}$ exists.
        \item[(O2)] \emph{Null element}: for all $U \in \cat{T}$, we have $\Inf{\phialg_U} = 0_U$.
        \item[(O3)] \emph{Monotonicity of combination}: for all $\phi_1, \phi_2, \psi_1, \psi_2 \in \Psi$ such that $\phi_1 \leq \phi_2$ and $\psi_1 \leq \psi_2$ we have $\phi_1 \otimes \psi_1 \leq \phi_2 \otimes \psi_2$.
        \item[(O4)] \emph{Monotonicity of projection}: for all $\phi, \psi \in \phialg$, if $\phi \leq \psi$, then $\proj{\phi}{U} \leq \proj{\psi}{U}$, for all $U \subseteq \dd \phi = \dd \psi$.
    \end{enumerate}
\end{definition}

Generically, we can interpret $\phi \leq \psi$ for $\phi,\psi \in \phialg$ as meaning $\psi$ is \emph{less}\footnote{Note that the ordering is in the \enquote{wrong} sense; this is so it corresponds to subset inclusion in~\cref{thm.tuple_to_info} below.
    A different, \emph{canonical ordering}, is used in~\cite{DBLP:books/daglib/0008195}, defined $\phi \leq_\mathsf{can} \psi \iff \phi \otimes \psi = \psi$ for all $\phi,\psi \in \phialg$.
    Actually, we have $\phi \leq \psi \implies \psi \leq_\mathsf{can} \phi$.
} informative than $\phi$.
Null elements $0_U$ for $U \in \cat{T}$ represent \emph{over-constrained}, or \emph{contradictory} information involving the variables $U$.

\subsubsection{Tuple system structure.}

Our goal now is to show that the lattice $(\lambdaalg, \preceq)$ of specifications introduced in~\cref{sec.specs} is naturally associated to a particular ordered information algebra.
To this end, we first introduce an auxiliary construction known as a \emph{tuple system}, which generalises the idea of a parameterised set of cartesian (i.e. ordinary) tuples.

\begin{definition}[tuple system]
    A \defn{tuple system} over a lattice $\cat{L}$ is a quadruple $(\fun{T},\cat{L},\dd,\projop)$, where $\fun{T}$ is a set, $\dd : \fun{T} \to \cat{L}$ a function, and $\projop : \fun{T} \times \cat{L} \to \fun{T}$ a partially defined operation, such that $\projtup{x}{U}$ is defined only when $U \leq \dd x$, and which satisfy the following axioms: for $x,y \in \fun{T}$ and $U,W \in \cat{L}$,
    \begin{enumerate}
        \item[(T1)] if $U \leq \dd x$ then $\dd (\projtup{x}{U}) = U$,
        \item[(T2)] if $W \subseteq U \subseteq \dd x$ then $\projtup{(\projtup{x}{U})}{W} = \projtup{x}{W}$,
        \item[(T3)] if $\dd x = U$ then $\projtup{x}{U} = x$,
        \item[(T4)] for $U \defeq \dd x $, $ W \defeq \dd y$, if $\projtup{x}{U \land W} = \projtup{y}{U \land W}$, then there exists $z \in \fun{T}$ such that $\dd z = U \lor W$, $\projtup{z}{U} = x$ and $\projtup{z}{W} = y$,
        \item[(T5)] for $\dd x = U$ and $U \leq W$, there exists $y \in \fun{T}$ such that $\dd y = W$ and $\projtup{y}{U} = x$.
    \end{enumerate}
\end{definition}

\note{Similar to \cref{note.prealgebra}, axioms (T1)--(T3) imply that $\fun{T}$ is associated to a presheaf $\fun{T} : \op{\cat{L}} \to \cat{Set}$ in an evident way.
    Also, any information algebra defines a tuple system, with the same domain and projection operations~\cite[Lemma 6.11, p. 170]{DBLP:books/daglib/0008195}.
}

\begin{theorem}\ifarxiv [proof in~\cref{thm.concurrency_tuple_proof}] \fi
    \label{thm.concurrency_tuple}
    The set $\chaos \defeq \coprod_{U \in \cat{D}} \chaos U$ with $\dd \defeq \pi_1 : \chaos \to \cat{D}$ the first projection from the disjoint union $\chaos$, i.e. $(U,\phi) \mapsto U$, and $\projop$ defined by restriction relative to the presheaf $\chaos$, i.e.
    $
        \projtup{x}{U} \defeq \restrict{x}{U} \defeq (\chaos{i}) x,
    $
    where $i$ is the inclusion $i : U \into \dd x$,
    defines a tuple system over the space $\cat{D}$ of domains.
\end{theorem}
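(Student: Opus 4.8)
The plan is to verify the five tuple-system axioms (T1)--(T5) for the quadruple $(\chaos, \cat{D}, \dd, \projop)$. Since $V$ is finite, $\cat{D}$ is closed under all unions and intersections, hence is a distributive lattice with $U \land W = U \cap W$ and $U \lor W = U \cup W$; this is what gives the axioms meaning over $\cat{D}$. Axioms (T1)--(T3) are then immediate from the fact that $\projtup{x}{U} = \restrict{x}{U} = (\chaos\,i)\,x$ is by definition the presheaf restriction of $\chaos$ along the inclusion $i : U \into \dd x$: (T1) holds because $\chaos\,i$ has codomain $\chaos U$; (T2) because $\chaos$ is a contravariant functor, so the composite of two restrictions equals the restriction along the composite inclusion; and (T3) because $\chaos$ preserves identities. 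None of these uses anything about $\chaos$ beyond functoriality.

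For (T5), given $x \in \chaos U$ and $U \subseteq W$, the idea is to extend $x$ by constant states. Representing $x$ by a matrix with rows indexed by $U$ and no two adjacent columns equal (as discussed above), pick for each $v \in W \setminus U$ some state $s_v \in \om_v$---possible precisely because each $\om_v$ is nonempty---and adjoin the constant rows $s_v$, giving a matrix $y$ with rows indexed by $W$. Its columns are still monotone in $\om W$, and since the $U$-rows already separate adjacent columns, $y$ has no repeated adjacent columns and is therefore a genuine trace $y \in \chaos W$. Projecting $y$ back to $U$ introduces no stuttering and returns $x$, so $\restrict{y}{U} = x$.

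The crux, and the step I expect to be the main obstacle, is the gluing axiom (T4). Write $A \defeq U \setminus W$, $B \defeq U \cap W$ and $C \defeq W \setminus U$, so that traces on $U$, $W$ and $U \cup W$ are chains valued in $\om A \times \om B$, $\om B \times \om C$ and $\om A \times \om B \times \om C$ respectively. Given $x \in \chaos U$ and $y \in \chaos W$ with $\restrict{x}{B} = \restrict{y}{B}$, a common spine I call $\beta = (\beta_0, \ldots, \beta_r)$ (whose entries satisfy $\beta_j \leq \beta_{j+1}$ and $\beta_j \neq \beta_{j+1}$), I would build the glue $z$ by a synchronised interleaving along $\beta$. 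Partition the columns of $x$ into consecutive blocks on which the $B$-coordinate is the constant $\beta_j$; this block structure is exactly what witnesses $\restrict{x}{B} = \beta$, and likewise for $y$. For each phase $j$, holding $B$ at $\beta_j$, first replay $x$'s $A$-transitions of block $j$ while holding $C$ at that block's initial $C$-value, then replay $y$'s $C$-transitions of block $j$ while holding $A$ at that block's final $A$-value; concatenating phases $j = 0, \ldots, r$ yields a monotone (possibly stuttering) sequence in $\om(U \cup W)$, and I set $z$ to be the nondegenerate cell it determines by \cref{ch1.lem:eilenberg-zilber} (deletion of adjacent repeats).

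Two verifications then remain. Monotonicity at each phase boundary reduces to $\beta_j \leq \beta_{j+1}$ together with the inequalities between the last and first $A$- (resp. $C$-) values of adjacent blocks, all of which follow from monotonicity of $x$, $y$ and $\beta$. For the restriction identities, projecting the phase-$j$ segment to $U$ reproduces block $j$ of $x$ in its first sub-phase, while its second sub-phase contributes only repetitions of that block's last $U$-column (which vanish upon deleting adjacent repeats), and symmetrically for $W$; because consecutive blocks differ in their $B$-value, no collapse occurs across phase boundaries, so the projections recover $x$ and $y$ exactly, giving $\restrict{z}{U} = x$ and $\restrict{z}{W} = y$ (using \cref{lem.nondegen_functor} to commute the projection past the passage to nondegenerate cells). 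The boundary cases $B = \emptyset$, and $x$ or $y$ equal to the empty chain $\unit$, fall under the same construction, the consistency hypothesis forcing $x = \unit$ iff $y = \unit$.
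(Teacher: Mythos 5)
Your proposal is correct, and for the crucial axiom (T4) it takes a genuinely different route from the paper. The paper's proof first establishes an auxiliary lemma that the poset of surjections onto a fixed $\lin{n}$ (ordered by factorisation) has binary meets, computed by taking componentwise maxima of the tuple representations; it then inflates $x$ and $y$ by the (non-unique) surjections $h_x, h_y$ witnessing the meet $f_x \land f_y$ of their two degeneracies over the shared nondegenerate spine $w$, pairs the resulting equal-length cells, and proves nondegeneracy of the glued cell \emph{abstractly}, by right-cancelling a hypothetical common degeneracy $e$ against the minimality of $f_x \land f_y$. Your construction instead produces an explicit amalgam by fully sequentialising the $A$-moves before the $C$-moves within each constant-$\beta_j$ block, and verifies monotonicity, nondegeneracy, and both restriction identities by direct inspection of matrix columns; the observation that phase boundaries cannot collapse because $\beta_j \neq \beta_{j+1}$, and the use of \cref{ch1.lem:eilenberg-zilber} and naturality to commute destuttering past projection, are exactly the points that need to be made, and you make them. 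The trade-off is that the paper's amalgam is the minimal-length one (columns counted by $\sum_i \max(f_i, g_i)$) and its meet lemma has independent reusable content, whereas your interleaving is longer but entirely elementary and self-contained---which suffices, since (T4) only asserts existence. Your treatments of (T1)--(T3) and (T5) coincide with the paper's (the latter is its flasqueness lemma for $\chaos$, with the same nonemptiness-of-$\om_v$ and no-new-stuttering observations), and your handling of the degenerate cases ($U \cap W = \emptyset$, empty chains) is if anything more careful than the paper's, which asserts $w = \unit$ there where in fact $\chaos\emptyset$ also contains the length-$0$ chain.
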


For a tuple system $(\fun{T}, \cat{L})$ and $U \in \cat{L}$, a subset $A \subseteq \fun{T}_U \defeq \setc{x \in \fun{T}}{\dd x = U}$ is called a \emph{relation}\footnote{In~\cite{DBLP:phd/ethos/Caru19}, a relation is instead called an \emph{information set}.}.
From any tuple system, we can generate an ordered information algebra of relations in a canonical way.

\begin{theorem}[{\cite[Theorem 6.10]{DBLP:books/daglib/0008195}}]
    \label{thm.tuple_to_info}
    Let $(\fun{T}, \cat{L})$ be a tuple system.
    Define a \defn{relation over $U \in \cat{L}$} to be a subset $R \subseteq \fun{T}$ such that $\dd x = U$ for all $x \in R$, and define the \defn{domain of $R$} as $\dd R \defeq U$.
    For $U \leq \dd R$, the projection of $R$ onto $U$ is defined
    \begin{equation}
        \proj{R}{U} \defeq \setc{ \projtup{x}{U} \in \fun{T}}{ x \in R }
    \end{equation}
    For relations $R, S$ define the \defn{join of $R$ and $S$} as
    \begin{equation}
        R \otimes S
        \defeq
        \setc{
            x \in \fun{T}
        }{
            \dd x = \dd R \lor \dd S,
            \projtup{x}{ \dd R} \in R,
            \projtup{x}{ \dd S} \in S
        }
    \end{equation}

    For each $U \in \cat{L}$, define $0_U \defeq \emptyset$, called the \defn{empty relation on $U$}, and $1_U \defeq \fun{T}_U$, called the \defn{universal relation on $U$}.

    Then the set $\cat{R}_\fun{T} \defeq \coprod_{U \in \cat{L}} \fun{P} (\fun{T}_U)$ of all relations, where $\fun{P}$ is the (covariant) powerset functor, is an ordered information algebra, with ordering given by subset inclusion $\subseteq$, with null elements $0_U$ and neutral elements $1_U$, for all $U \in \cat{L}$.
\end{theorem}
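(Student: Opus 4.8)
The plan is to verify in turn the nine information-algebra axioms (I1)--(I9) and the four order axioms (O1)--(O4) for the structure $(\cat{R}_\fun{T}, \cat{L}, \dd, \projop, \otimes, \subseteq)$, feeding off the tuple-system laws (T1)--(T5). A large block of these axioms reduces to applying the tuple-level laws pointwise. Since $\proj{R}{U}$ is defined elementwise by $\projtup{(-)}{U}$, axioms (I2)--(I4) are immediate from (T1)--(T3), and the labelling axiom (I5) is built into the definition of $\otimes$, all of whose elements carry domain $\dd R \lor \dd S$. Commutativity in (I1) follows from the symmetry of the defining condition together with commutativity of $\lor$; for associativity I would rewrite both $(R \otimes S) \otimes T$ and $R \otimes (S \otimes T)$, using transitivity (T2) to collapse the nested projections, into the single symmetric membership condition
\[
    \setc{x \in \fun{T}}{\dd x = \dd R \lor \dd S \lor \dd T,\ \projtup{x}{\dd R} \in R,\ \projtup{x}{\dd S} \in S,\ \projtup{x}{\dd T} \in T}.
\]
Neutrality (I7) and nullity (I8) unfold directly from the definitions $1_U \defeq \fun{T}_U$ and $0_U \defeq \emptyset$ with (T1) and (T3): combining with $\fun{T}_U$ imposes a vacuous constraint, while combining with $\emptyset$ imposes an unsatisfiable one, and $\proj{R}{W} = \emptyset$ forces $R = \emptyset$ since every $x \in R$ contributes $\projtup{x}{W}$. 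Idempotence (I9) holds because any $x \in R \otimes \proj{R}{U}$ has domain $\dd R$, so $\projtup{x}{\dd R} = x \in R$ by (T3), the reverse inclusion being trivial.

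The substantive step, and where I expect the main difficulty to lie, is the combination axiom (I6). Writing $U \defeq \dd R$, $W \defeq \dd S$ and fixing $Q$ with $U \subseteq Q \subseteq U \cup W$, I would first record the domain identity $U \lor (Q \land W) = Q$, which holds because $\cat{D}$ is distributive (its joins and meets being unions and intersections of open sets), so that both $\proj{(R \otimes S)}{Q}$ and $R \otimes \proj{S}{Q \land W}$ are relations over $Q$. The inclusion $\proj{(R \otimes S)}{Q} \subseteq R \otimes \proj{S}{Q \land W}$ is a routine chase that rewrites the required projections of a typical element via (T2). The reverse inclusion is the crux: given a tuple $y$ over $Q$ with $\projtup{y}{U} \in R$ and $\projtup{y}{Q \land W} = \projtup{s}{Q \land W}$ for some $s \in S$, I must manufacture an $x \in R \otimes S$ with $\projtup{x}{Q} = y$. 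This is exactly what the gluing axiom (T4) provides: since $y$ and $s$ agree on $Q \land W$, there is a $z$ with $\dd z = Q \lor W = U \lor W$, $\projtup{z}{Q} = y$ and $\projtup{z}{W} = s$; a short computation with (T2) then gives $\projtup{z}{U} = \projtup{y}{U} \in R$, so $z \in R \otimes S$ projects onto $y$.

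Finally, the order axioms are light. The partial order is subset inclusion within each fibre $\fun{P}(\fun{T}_U)$, so (O1) holds with infima given by intersection, and (O2) holds since $\bigcap \fun{P}(\fun{T}_U) = \emptyset = 0_U$. Monotonicity of combination (O3) and of projection (O4) are immediate from their definitions, as enlarging $R$ or $S$ can only enlarge the admissible sets of tuples or their images under $\projtup{(-)}{U}$. Assembling these checks establishes that $\cat{R}_\fun{T}$ is an ordered information algebra with the stated null and neutral elements; the only genuinely nontrivial ingredient is the appeal to the gluing axiom (T4) in the proof of (I6).
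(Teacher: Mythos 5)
The paper does not prove this theorem; it is imported verbatim from Kohlas (Theorem~6.10 of the cited book), so there is no in-paper proof to compare against. Your reconstruction is correct and follows the standard argument: the routine axioms ((I1)--(I5), (I7)--(I9), (O1)--(O4)) do reduce to pointwise applications of (T1)--(T3) and unfolding of definitions, and you have correctly isolated the one genuinely nontrivial step, namely the reverse inclusion $R \otimes \proj{S}{Q \cap W} \subseteq \proj{(R \otimes S)}{Q}$ in the combination axiom (I6), where the gluing axiom (T4) is applied to $y$ and $s$ (which agree on $Q \wedge W$) to produce $z$ with $\dd z = Q \vee W = U \vee W$, $\projtup{z}{Q} = y$ and $\projtup{z}{W} = s$, after which (T2) gives $\projtup{z}{U} = \projtup{y}{U} \in R$. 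Your supporting observations --- that $U \vee (Q \wedge W) = Q$ by distributivity of the lattice of opens, and that associativity of $\otimes$ follows by collapsing nested projections with (T2) into a single symmetric membership condition --- are both sound.

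Two minor remarks. First, you nowhere use (T5); this is consistent with the axiom list as stated in the paper, which omits the \emph{stability} property $\proj{(1_W)}{U} = 1_U$ present in Kohlas's fuller statement --- that is the one place (T5) would be needed, so its absence from your proof is not a gap here. Second, when checking (I2) and (I5) you should note explicitly that the empty relation is tagged with its domain via the disjoint union $\coprod_{U} \fun{P}(\fun{T}_U)$, since otherwise $\dd\,\emptyset$ is ill-defined; this is bookkeeping rather than substance.
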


We associate to the lattice $(\lambdaalg, \preceq)$ of specifications, the ordered information algebra $(\chaosalg,\cat{D},\dd, \projop, \otimes,\subseteq)$, whose valuations represent \emph{nondeterministic computations}; the nondeterminism corresponding to the multiplicity of traces in its relations.
On each domain $U \in \cat{D}$, the ordering $\subseteq$ on $(\chaosalg)U $ encodes \emph{implementability} (or \emph{refinement}) via reduction of nondeterminism, i.e. $R \subseteq S$ if and only if every trace of $R$ is also a trace of $S$.
The top element $1_U$ consists of all possible traces on $U$, whereas the bottom element $0_U$ is an empty set of traces.

Often, the combination operation of an information algebra has a canonical description via an adjunction~\cite{MR4024458}.
It is convenient to note that this holds for $(\chaosalg,\cat{D},\dd, \projop, \otimes,\subseteq)$.

\subsubsection{Adjoint structure.}

The following definition is adapted from~\cite{MR4024458} to an arbitrary finite base space.
Let $(\phialg, \cat{T}, \dd, \projop, \otimes, \leq)$ be an ordered information algebra.
Due to the universal property of products in the category $\cat{Set}$, we have, for all opens $U, W \in \cat{T}$, the following commutative diagram,
\[\begin{tikzcd}[sep=huge]
        & {\phialg(U \cup W)} \\
        {\phialg U} & {\phialg U \times \phialg W} & {\phialg W}
        \arrow[from=2-2, to=2-1]
        \arrow[from=2-2, to=2-3]
        \arrow["{\rho_U^{U \cup W}}"', from=1-2, to=2-1]
        \arrow["{\rho_W^{U \cup W}}", from=1-2, to=2-3]
        \arrow["{(\rho_U^{U \cup W}, \rho_W^{U \cup W})}"{description}, dashed, from=1-2, to=2-2]
    \end{tikzcd}\]
where $\phialg$ is viewed as a prealgebra, and where $\rho^U_W : \phialg U \to \phialg W$ are the restriction maps $x \mapsto \restrict{x}{W}$ for all $U, W \in \cat{T}$ with $W \subseteq U$.

\begin{definition}[adjoint information algebra]
    An \defn{adjoint information algebra} is an ordered information algebra $(\phialg, \cat{T}, \dd, \projop, \otimes, \leq)$ such that each restriction of its combination operation $- \otimes - : \phialg U \times \phialg W \to \phialg (U \cup W)$ is the right adjoint of the map $(\rho_U^{U \cup W},\rho_W^{U \cup W})$, defined in the diagram above.
    Hence, $\otimes$ is the unique map such that both,
    \begin{eqnarray}
        \label{eq.adjoint1}
        1_{\phialg (U \cup W)} \leq \otimes \circ (\rho_U^{U \cup W},\rho_W^{U \cup W})\\
        \label{eq.adjoint2}
        (\rho_U^{U \cup W},\rho_W^{U \cup W}) \circ \otimes \leq
        1_{\phialg U \times \phialg W}
    \end{eqnarray}
    where $\leq$ is the pointwise order induced from the partial order of the algebra, and $1_{\phialg U} : \phialg U \to \phialg U$ is the identity function on $\phialg U$ for each $U \in \cat{T}$.
\end{definition}

In other words, in an adjoint information algebra $\phialg$ with $U, W \in \cat{T}$, \cref{eq.adjoint1} says for all $\phi \in \phialg_{U \cup W}$, it holds
\begin{equation}
    \label{eq.adj_comb_right}
    \phi \leq \proj{\phi}{U} \otimes \proj{\phi}{W}
\end{equation}
and \cref{eq.adjoint2} says for all $\phi \in \phialg_U$ and $\psi \in \phialg_W$, both the following inequalities hold
\begin{equation}
    \label{eq.adj_comb_left}
    \proj{(\phi \otimes \psi)}{U} \leq \phi,\qquad
    \proj{(\phi \otimes \psi)}{W} \leq \psi
\end{equation}

\begin{theorem}\ifarxiv [proof in~\cref{thm.adjoint_info_alg_proof}]\fi
    \label{thm.adjoint_info_alg}
    An information algebra of relations $\cat{R}_\fun{T}$ over a tuple system $\fun{T}$ is adjoint.
\end{theorem}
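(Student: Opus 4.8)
The plan is to verify the defining property of an adjoint information algebra directly against the concrete combination and projection operations of $\cat{R}_\fun{T}$ supplied by \cref{thm.tuple_to_info}, using the reformulation \cref{eq.adj_comb_right}--\cref{eq.adj_comb_left} already recorded in the text. First I would invoke the standard fact that a monotone Galois connection between posets is uniquely determined whenever it exists. Since the order on $\cat{R}_\fun{T}$ is subset inclusion and the order on $\cat{R}_\fun{T} U \times \cat{R}_\fun{T} W$ is the induced product order, this makes the \enquote{unique map} clause of the definition automatic: it suffices to exhibit the combination $\otimes$ as a right adjoint of $(\rho_U^{U \cup W}, \rho_W^{U \cup W})$ by checking monotonicity of both maps together with the unit inequality \cref{eq.adjoint1} and the counit inequality \cref{eq.adjoint2}. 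Fixing $U, W \in \cat{D}$ and writing $U \cup W$ for their join (which is the lattice join $U \lor W$ used to define $\otimes$ in \cref{thm.tuple_to_info}), the two inequalities to establish are, for every relation $R$ over $U \cup W$,
\[
    R \subseteq \proj{R}{U} \otimes \proj{R}{W},
\]
and, for all relations $R$ over $U$ and $S$ over $W$, both $\proj{(R \otimes S)}{U} \subseteq R$ and $\proj{(R \otimes S)}{W} \subseteq S$.

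For the counit inequality I would simply unfold the definition of combination: by construction every $x \in R \otimes S$ satisfies $\dd x = U \cup W$ together with $\projtup{x}{U} \in R$ and $\projtup{x}{W} \in S$. As $\proj{(R \otimes S)}{U} = \setc{\projtup{x}{U}}{x \in R \otimes S}$ is well defined because $U \subseteq U \cup W$, each of its elements already lies in $R$, giving $\proj{(R \otimes S)}{U} \subseteq R$, and the argument for $W$ is symmetric. For the unit inequality I would take any $x \in R$, so $\dd x = U \cup W$; then $\projtup{x}{U} \in \proj{R}{U}$ and $\projtup{x}{W} \in \proj{R}{W}$ directly from the definition of projection, so $x$ satisfies the three membership conditions defining $\proj{R}{U} \otimes \proj{R}{W}$, whence $x \in \proj{R}{U} \otimes \proj{R}{W}$. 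To complete the Galois connection I would note that $(\rho_U^{U \cup W}, \rho_W^{U \cup W})$ and $\otimes$ are both monotone for $\subseteq$; this is exactly monotonicity of projection (O4) and of combination (O3) for the ordered information algebra $\cat{R}_\fun{T}$ from \cref{thm.tuple_to_info}, and in any case is read off the same set-builder descriptions.

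The computations themselves are short, so the real content is bookkeeping rather than difficulty. The main thing to get right is the domain arithmetic that keeps every projection and combination well defined: using the tuple-system axioms (T1)--(T3) one confirms $\dd(\proj{R}{U}) = U$, $\dd(R \otimes S) = U \cup W$, and compatibility of iterated restrictions, so that the formulas of \cref{thm.tuple_to_info} may legitimately be applied to the domains $U$, $W$, and $U \cup W$. I expect no genuine obstacle beyond carefully matching the abstract adjunction diagram to these concrete subset inclusions, identifying $U \cup W$ in the adjoint-algebra definition with the lattice join used to define $\otimes$, and observing that the uniqueness demand in the definition is discharged for free by the poset-level uniqueness of adjoints.
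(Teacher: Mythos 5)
Your proposal is correct and follows essentially the same route as the paper's own proof: both verify the unit inequality $R \subseteq \proj{R}{U} \otimes \proj{R}{W}$ and the counit inequalities $\proj{(R_1 \otimes R_2)}{U} \subseteq R_1$, $\proj{(R_1 \otimes R_2)}{W} \subseteq R_2$ by unfolding the set-builder definitions of combination and projection from \cref{thm.tuple_to_info}. Your additional remarks on monotonicity and on uniqueness of adjoints between posets are left implicit in the paper but are accurate and harmless.
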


\begin{corollary}
    The information algebra $\chaosalg$ is adjoint.
\end{corollary}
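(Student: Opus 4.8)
The plan is to observe that $\chaosalg$ is, by its very construction, the information algebra of relations over a tuple system, so that the corollary follows by directly chaining the two immediately preceding results with no new argument required.

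First I would recall the identification $\chaosalg = \cat{R}_\chaos$, where the underlying tuple system is precisely the one furnished by \cref{thm.concurrency_tuple}: the set $\chaos = \coprod_{U \in \cat{D}} \chaos U$, equipped with the first projection $\dd = \pi_1$ as its labelling map and with projection given by presheaf restriction $\projtup{x}{U} = \restrict{x}{U}$, satisfies the tuple-system axioms (T1)--(T5) over the lattice $\cat{D}$ of domains. Hence $\chaos$ is a genuine tuple system, and $\cat{R}_\chaos$ is exactly the ordered information algebra of relations that \cref{thm.tuple_to_info} produces from it, with ordering given by subset inclusion $\subseteq$, null elements the empty relations $0_U = \emptyset$, and neutral elements the universal relations $1_U = \chaos U$.

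Next I would invoke \cref{thm.adjoint_info_alg}, which asserts that the information algebra of relations $\cat{R}_\fun{T}$ over an \emph{arbitrary} tuple system $\fun{T}$ is adjoint. Instantiating $\fun{T} = \chaos$ then yields immediately that $\chaosalg = \cat{R}_\chaos$ is adjoint, since adjointness of $\otimes$ with respect to the pairing $(\rho^{U \cup W}_U, \rho^{U \cup W}_W)$ is a property that \cref{thm.adjoint_info_alg} establishes for every such algebra uniformly.

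There is essentially no obstacle to overcome: the corollary is a direct specialisation of \cref{thm.adjoint_info_alg}. The only point meriting care is the bookkeeping identification that the tuple system supplied by \cref{thm.concurrency_tuple} is literally the one over which $\chaosalg$ is defined---that the domain map $\dd = \pi_1$ and the restriction-based projection $\projop$ coincide on both sides---so that \cref{thm.adjoint_info_alg} applies verbatim. Once this identification is recorded, adjointness transfers with no further computation.
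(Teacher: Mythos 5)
Your proposal is correct and matches the paper's (implicit) argument exactly: $\chaosalg$ is by definition the relation algebra $\cat{R}_\chaos$ over the tuple system established in \cref{thm.concurrency_tuple}, so adjointness is an immediate instantiation of \cref{thm.adjoint_info_alg}. The paper offers no separate proof for this corollary precisely because it is this direct specialisation.
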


\section{Local and global consistency}

\label{sec.consistency}

In this subsection, we introduce two\footnote{A third notion of \emph{complete disagreement} is introduced in~\cite{MR4024458,DBLP:phd/ethos/Caru19}, but we do not make use of it here.} concepts of \emph{agreement} that have an interesting interpretation for specifications in $(\lambdaalg,\preceq)$.

Let $(\phialg, \cat{T}, \dd, \projop, \otimes)$ be an information algebra over a space $(X,\cat{T})$.
A finite set of valuations $K \defeq \set{\phi_1, \ldots, \phi_n} \subseteq \phialg$ is called a \defn{knowledgebase (on $\phialg$)}.
We are often interested in the case where $\bigcup_{\phi \in K} \dd \phi = X$.
\begin{definition}[local agreement]
    Two valuations $\phi, \psi \in \phialg$ \defn{locally agree} if and only if
    \begin{equation}
        \proj{\phi}{ \dd \phi \cap \dd \psi} = \proj{\psi}{ \dd \phi \cap \dd \psi}
    \end{equation}
    A knowledgebase $K = \set{\phi_1, \ldots, \phi_n} \subseteq \phialg$ \defn{locally agrees} if and only if every pair $\phi_i,\phi_j$ in $K$ locally agrees.
\end{definition}

\begin{definition}[global agreement]
    \label{def.global_agreement}
    A knowledgebase $K = \set{\phi_1, \ldots, \phi_n} \subseteq \phialg$ \defn{globally agrees} if and only if there exists\footnote{Unlike in the definition of a \emph{sheaf}, which is a presheaf on a topological space satisfying a certain continuity condition, there is no requirement that the amalgamation of local data (here $\gamma$) should be unique.
        Actually, it is common in physical applications that global sections are not unique; see for example~\cite{MR3157249} for applications of sheaf theory to the field of signal processing, where this is generally the case.
    } a valuation $\gamma \in \phialg_U$, where $U = \bigvee_{i=1}^n \dd \phi_i$ for which, for all $1 \leq i \leq n$,
    \begin{equation}
        \proj{\gamma}{ \dd \phi_i} = \phi_i
    \end{equation}
\end{definition}
The $\gamma$ of~\cref{def.global_agreement} is called a \defn{truth valuation for $K$}.

\note{Global agreement implies local agreement:
    for any pair $\phi, \psi$ in a globally agreeing knowledgebase $K$, we have
    \begin{equation}
        \proj{\phi}{\dd \phi \cap \dd \psi}
        = \proj{(\proj{\gamma}{\dd \phi })}{\dd \phi \cap \dd \psi}
        \stackrel{\text{(I3)}}{=} \proj{\gamma}{\dd \phi \cap \dd \psi}
        \stackrel{\text{(I3)}}{=} \proj{(\proj{\gamma}{\dd \psi})}{\dd \phi \cap \dd \psi}
        = \proj{\psi}{\dd \phi \cap \dd \psi}
    \end{equation}
    The converse is generally false, as we see in~\cref{ex.dining} below.
}

To any specification $(\fun{A}, \cat{U})$ we associate a knowledgebase on $\chaosalg$,
\begin{equation}
    K_{(\fun{A}, \cat{U})} \defeq
    \set{\fun{A} U}_{U \in \cat{U}}
\end{equation}
where $\dd ( \fun{A} U) = U$ for each $U \in \cat{U}$.

\begin{definition}[local/global consistency]
    The specification $(\fun{A}, \cat{U})$ is \defn{locally consistent} if and only if the associated knowledgebase $K_{(\fun{A}, \cat{U})}$ locally agrees.
    The specification $(\fun{A}, \cat{U})$ is \defn{globally consistent} if and only if $K_{(\fun{A}, \cat{U})}$ globally agrees, and the corresponding truth valuation $\gamma \in \chaosalg$ is a section of $\fun{A}$.
\end{definition}

Local consistency of a specification is a basic prerequisite for correctness.
Global consistency is a subtler correctness criterion, and is related to Lamport's definition of \emph{sequential consistency} for concurrent programs~\cite{DBLP:journals/tc/Lamport79}:

\blockquote{\ldots the result of any execution is the same as if the operations of all the processors were executed in some sequential order, and the operations of each individual processor appear in this sequence in the order specified by its program.}

Indeed, a globally consistent specification is one that can be represented by a subset of execution traces on the union of the domains of all the valuations, each one encoding a sequential ordering of states, such that when restricted to an individual domain in the context, the states occur in the same order as specified by the valuation on that domain.

The following characterises local consistency of a specification in terms of a property of the associated subpresheaf of chaos, and suggests a convenient approach to its verification.

\begin{theorem}\ifarxiv [proof in~\cref{thm.compatible_proof}] \fi
    \label{thm.compatible}
    A specification $(\fun{A}, \cat{U})$ is locally consistent if $\fun{A}$ is \defn{flasque beneath the cover $\cat{U}$}, i.e. if every restriction map $A W' \to AW$ is surjective, whenever $W \subseteq W' \subseteq U$ for some $U \in \cat{U}$.
\end{theorem}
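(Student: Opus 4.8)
The plan is to unfold the definition of local consistency and reduce it, pair by pair, to the surjectivity hypothesis. By definition, $(\fun{A},\cat{U})$ is locally consistent exactly when the associated knowledgebase $K_{(\fun{A},\cat{U})} = \set{\fun{A} U}_{U \in \cat{U}}$ locally agrees, i.e. when for every pair $U_i, U_j \in \cat{U}$ the valuations $\fun{A} U_i$ and $\fun{A} U_j$ (with domains $U_i$ and $U_j$ respectively) satisfy
\begin{equation}
    \proj{(\fun{A} U_i)}{U_i \cap U_j} = \proj{(\fun{A} U_j)}{U_i \cap U_j}.
\end{equation}
So it suffices to fix an arbitrary such pair and establish this single equality; note that $U_i \cap U_j$ is the meet of $U_i$ and $U_j$ in the lattice $\cat{D}$ and is itself a domain, so both projections are well-defined.

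First I would compute each side explicitly. By \cref{thm.tuple_to_info}, the projection in $\chaosalg$ of a relation $R$ onto a subdomain $V$ is $\proj{R}{V} = \setc{\projtup{x}{V}}{x \in R}$, and in the tuple system $\chaos$ of \cref{thm.concurrency_tuple} the operation $\projtup{\cdot}{V}$ is precisely the presheaf restriction. Hence $\proj{(\fun{A} U_i)}{U_i \cap U_j}$ is the image of the restriction map $\fun{A} U_i \to \chaos(U_i \cap U_j)$; and because $\fun{A}$ is a \emph{subpresheaf} of $\chaos$, this map factors through $\fun{A}(U_i \cap U_j)$, so the projection is exactly the image of the restriction map $\fun{A} U_i \to \fun{A}(U_i \cap U_j)$. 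The same description applies to $\fun{A} U_j$.

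The final step is to invoke flasqueness. Taking $W \defeq U_i \cap U_j$, $W' \defeq U_i$ and $U \defeq U_i \in \cat{U}$, we have the chain $W \subseteq W' \subseteq U$ with $U \in \cat{U}$, so the hypothesis forces the restriction $\fun{A} U_i \to \fun{A}(U_i \cap U_j)$ to be surjective; its image is therefore all of $\fun{A}(U_i \cap U_j)$, whence $\proj{(\fun{A} U_i)}{U_i \cap U_j} = \fun{A}(U_i \cap U_j)$. Symmetrically, using $W' \defeq U_j$ and $U \defeq U_j$, we obtain $\proj{(\fun{A} U_j)}{U_i \cap U_j} = \fun{A}(U_i \cap U_j)$. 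The two projections thus coincide, establishing local agreement of the pair, and since the pair was arbitrary, local consistency of $(\fun{A},\cat{U})$.

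I expect the only delicate point to be the bookkeeping in the second paragraph — confirming that the information-algebra projection really computes the image of the presheaf restriction, and that this image lands inside $\fun{A}(U_i \cap U_j)$ rather than merely inside $\chaos(U_i \cap U_j)$, which is exactly where the subpresheaf property is used. Everything else is a direct unfolding of definitions, and the flasque hypothesis is tailored so that the short chain $U_i \cap U_j \subseteq U_i$ lies beneath a single cover element $U_i \in \cat{U}$.
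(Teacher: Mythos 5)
Your proof is correct and follows essentially the same route as the paper's: both arguments use flasqueness to show that each projection $\proj{(\fun{A}U_i)}{U_i \cap U_j}$ equals $\fun{A}(U_i \cap U_j)$ outright, whence the two sides of the local-agreement condition coincide. Your version merely makes explicit the bookkeeping (that the information-algebra projection is the image of the presheaf restriction, landing in $\fun{A}(U_i\cap U_j)$ by the subpresheaf property) which the paper leaves implicit.
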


The next result shows that in the case of an adjoint information algebra, a global valuation must take on the particular form of a solution to a so-called \emph{inference problem}~\cite{DBLP:books/daglib/0008195,pouly2012generic}, and thereby suggests a method to determine global consistency for specifications in $\lambdaalg$.

\begin{theorem}\ifarxiv [proof in~\cref{thm.global_proof}] \fi
    \label{thm.global}
    Let $\phialg$ be an adjoint information algebra,
    let $K = \set{\phi_1, \ldots, \phi_n} \subseteq \phialg$ be a knowledgebase, and let $\gamma = \bigotimes_{i=1}^n \phi_i$.
    Then $K$ agrees globally if and only if $\proj{\gamma}{\dd \phi_i} = \phi_i$ for all $1 \leq i \leq n$.
    In this case, $\gamma$ is the greatest truth valuation for $K$.
\end{theorem}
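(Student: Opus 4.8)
The plan is to reduce everything to the two adjoint inequalities \cref{eq.adj_comb_right} and \cref{eq.adj_comb_left}, after first promoting them from binary to $n$-ary combinations. Write $U_i \defeq \dd \phi_i$ and $U \defeq \bigvee_{i=1}^n U_i$; since the domain of a combination is the union of the domains by (I5), associativity and commutativity (I1) give $\dd \gamma = \bigcup_i U_i = \bigvee_i U_i = U$ (join and union coincide in a topology), so $\gamma \in \phialg_U$ and each projection $\proj{\gamma}{U_i}$ is defined. The $(\Leftarrow)$ direction is then immediate: if $\proj{\gamma}{U_i} = \phi_i$ for all $i$, then $\gamma$ itself witnesses \cref{def.global_agreement}, so $K$ agrees globally.

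For the $(\Rightarrow)$ direction and the maximality claim I would first establish two auxiliary inequalities valid in any adjoint information algebra. \emph{(Right)} For any $\phi \in \phialg_U$ with $U = \bigcup_i U_i$, I claim $\phi \leq \bigotimes_{i} \proj{\phi}{U_i}$, proved by induction on $n$: the base case $n=1$ is (I4), and for the step I split $U = V \cup U_n$ with $V = \bigcup_{i<n} U_i$, apply the binary \cref{eq.adj_comb_right} to get $\phi \leq \proj{\phi}{V} \otimes \proj{\phi}{U_n}$, invoke the induction hypothesis on $\proj{\phi}{V}$ together with transitivity (I3) (to collapse $\proj{(\proj{\phi}{V})}{U_i} = \proj{\phi}{U_i}$) and monotonicity of combination (O3), and finish by transitivity of $\leq$. \emph{(Left)} For $\gamma = \bigotimes_i \phi_i$ I claim $\proj{\gamma}{U_i} \leq \phi_i$ for each $i$; this follows directly from the binary \cref{eq.adj_comb_left} after regrouping $\gamma = \phi_i \otimes \bigl(\bigotimes_{j\neq i} \phi_j\bigr)$ by (I1). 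Note the left inequality holds unconditionally, so the genuine content of global agreement is supplying the reverse inequality.

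With these in hand, suppose $K$ agrees globally and let $\gamma'$ be a truth valuation, so $\dd \gamma' = U$ and $\proj{\gamma'}{U_i} = \phi_i$. Applying the right inequality to $\gamma'$ gives $\gamma' \leq \bigotimes_i \proj{\gamma'}{U_i} = \bigotimes_i \phi_i = \gamma$; then monotonicity of projection (O4) yields $\phi_i = \proj{\gamma'}{U_i} \leq \proj{\gamma}{U_i}$. Combined with the left inequality $\proj{\gamma}{U_i} \leq \phi_i$ and antisymmetry of $\leq$, this gives $\proj{\gamma}{U_i} = \phi_i$ for all $i$, proving $(\Rightarrow)$. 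The same right inequality applied to an arbitrary truth valuation $\gamma''$ shows $\gamma'' \leq \bigotimes_i \proj{\gamma''}{U_i} = \gamma$, so $\gamma$ (itself a truth valuation by the equalities just established) is the greatest truth valuation for $K$.

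I expect the only real obstacle to be the inductive proof of the $n$-ary right inequality, where one must apply transitivity (I3) to flatten the nested projections $\proj{(\proj{\phi}{V})}{U_i}$ before invoking monotonicity of combination (O3); the left inequality and both endgame steps are routine once binary adjointness has been lifted to the $n$-ary setting.
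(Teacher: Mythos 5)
Your proposal is correct and follows essentially the same route as the paper's proof: establish $\delta \leq \gamma$ for any truth valuation $\delta$ via the right adjoint inequality, then sandwich $\phi_i \leq \proj{\gamma}{\dd\phi_i} \leq \phi_i$ using monotonicity of projection (O4) and the left adjoint inequality. The only difference is that you explicitly carry out the induction lifting the binary adjoint inequalities to their $n$-ary forms, a detail the paper uses implicitly.
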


Determining if a knowledgebase is locally consistent is computationally straightforward.
Global consistency, however, is computationally intensive to verify.
To give an indication of the computational cost, assume for simplicity that for each variable $v \in V$, $\om_v = \omega$ is a constant value.
Let $K = \set{\phi_1, \ldots, \phi_n}$ be a knowledgebase.
Then to determine if $K$ is globally consistent, according to~\cref{thm.global} we must compute $\proj{(\otimes K)}{\dd \phi_j}$ for each $1 \leq j \leq n$.
To compute the join $\otimes K$ involves \enquote{filtering} from the valuations on the cartesian product of the state spaces $\om_{\dd \phi_i} \isoto \omega^{\dd \phi_i}$, i.e. the proset
\begin{equation}
    \label{eq.expensive}
    \prod_{i} \omega^{\dd \phi_i}
    \isoto
    \omega^{ \coprod_i \dd \phi_i}
\end{equation}
whose underlying set has cardinality exponential in the number of variables, and is generally intractable to compute in practice.

Fortunately, by applying the combination axiom (I6) of~\cref{axiom.combination} inductively, we can avoid computing the join $\otimes K$ directly, and instead compute for each $j$,
\begin{equation}
    \label{eq.cheaper}
    \phi_j \otimes \paren{\bigotimes_{i \neq j} \proj{\phi_i}{ \dd \phi_i \cap \dd \phi_j }}
\end{equation}
which is still exponential in the variables, but the number of variables in the exponent has been reduced, often significantly.

This is the starting point for \emph{local computation} algorithms, such as the \emph{fusion} and \emph{collect} algorithms, which are generic algorithms for computing global agreement in information algebras, which in some applications are best-in-class~\cite{DBLP:books/daglib/0008195,DBLP:phd/ethos/Caru19}.

\section{Example: the dining philosophers}
\label{sec.dining}
In~\cite{MR4024458}, a knowledgebase that locally agrees but globally disagrees is called \emph{contextual}.
We next give an example of this phenomenon---a locally consistent but globally inconsistent specification---in a classical scenario in concurrency, the \enquote{dining philosophers}.

\begin{example}
    \label{ex.dining}

    This example models a group of philosophers sat at a circular table wanting to eat a meal, with one chopstick on the table between each adjacent pair of philosophers.
    A philosopher can either think or eat.
    To eat, a philosopher must hold both their adjacent chopsticks.
    Our presentation here is based on the one in~\cite{DBLP:journals/mscs/Goguen92}.

    Let $n \geq 2$, let $\set{p_0,\ldots,p_{n-1}}$ be variables corresponding to the philosophers, and let $\set{c_0,\ldots,c_{n-1}}$ be variables corresponding to the chopsticks.
    For each $0 \leq i < n$, let
    \begin{equation}
        \om_{p_i} \defeq \set{t,e}, \qquad
        \om_{c_i} \defeq \set{i-1,*,i}
    \end{equation}
    where $t$ and $e$ stand respectively for \enquote{thinking} and \enquote{eating}, and $i-1, i$ refer respectively to the philosophers $p_{i-1}, p_i$ who may hold chopstick $c_i$, and $*$ to the neutral state of the chopstick on the table (all indices taken $\modd{n}$).

    Define a context $\cat{U} = \set{U_i}_{i =0}^{n-1}$ where $U_i \defeq \set{c_i, p_i, c_{i + 1}}$ represents the frame of reference of the philosopher $i$ as an independent asynchronous process in the distributed system.

    For example, if $n = 3$, we have
    \begin{equation}
        \cat{U} =
        \set{
            \set{c_0, p_0, c_1},
            \set{c_1, p_1, c_2},
            \set{c_2, p_2, c_0}
        }
    \end{equation}
    Let $(V \defeq \cup \cat{U},\cat{D})$ be the topological space generated by the \emph{subbasis} $\cat{U}$; i.e. $\cat{D}$ consists of all unions of intersections of elements of $\cat{U}$.

    (A visual representation of the situation is given by the \emph{\v{C}ech nerve} of the context $\cat{U}$; this is a simplicial complex whose $n$-cells are nonempty $n$-fold intersections of the $U_i$ with distinct indices.
    In the case that $n=3$, the \v{C}ech nerve of $\cat{U}$ is (the boundary of) a triangle, with only $0$-cells and $1$-cells (see~\cref{ch1.fig.dining_nerve}).)

    \begin{figure}
        \begin{equation}\begin{tikzcd}
                {U_2} && {U_1} \\
                & {U_0}
                \arrow["{\set{c_0}}", no head, from=2-2, to=1-1]
                \arrow["{\set{c_2}}", no head, from=1-1, to=1-3]
                \arrow["{\set{c_1}}", no head, from=1-3, to=2-2]
            \end{tikzcd}\end{equation}
        \caption{\v{C}ech nerve of the context $\cat{U}$ for $n=3$.}
        \label{ch1.fig.dining_nerve}
    \end{figure}

    Informally, let $(\fun{A}, \cat{U})$ be the specification containing all traces according to the following protocol: the legal state transitions on the $\set{c_i}$ for $x \in \om_{c_i}$ are:
    \begin{align}
        * & \mapsto x \\
        x & \mapsto * \\
        x & \mapsto x
    \end{align}
    meaning, a chopstick may either be picked up or put down, or remain in its current state.
    The legal state transitions on the $U_i$ are:
    \begin{align}
        \tag{rule 1, initial state} & \leadsto (*,t,*)                      \\
        \tag{rule 2} (l,x,r)        & \leadsto (l',x,r') & l,l',r,r' \neq i \\
        \tag{rule 3} (l,t,r)        & \leadsto (l',e,r') & l,l',r,r' \neq i \\
        \tag{rule 4} (l,e,*)        & \leadsto (l',e,i)  & l,l' \neq i      \\
        \tag{rule 5} (l,e,i)        & \leadsto (l',e,i)  & l,l' \neq i      \\
        \tag{rule 6} (*,e,i)        & \leadsto (i,e,i)                      \\
        \tag{rule 7} (i,e,i)        & \leadsto (*,t,*)
    \end{align}
    The first rule says philosophers begin thinking, without chopsticks.
    The second rule says if a philosopher has no chopsticks, they may continue in their present state, without constraining the actions of their two neighbours.
    The third rule says if they have no chopsticks and are thinking, then they may become hungry, without constraining their neighbours.
    The fourth rule says that if they have no chopsticks, they may pick up their right one, if it is available, without constraining their left neighbour.
    The fifth rule says they can remain in the state of having just a right chopstick, without constraining their left neighbour.
    The sixth rule says if they have a right chopstick, they may pick up the left one if it is available.
    The last rule says that if they have both chopsticks and are eating, they can put them both down and think.

    Consider a sub-specification $(\fun{B}, \cat{U}) \preceq (\fun{A}, \cat{U})$ whose corresponding knowledgebase is $K_{(\fun{B}, \cat{U})} \defeq \set{\phi_0, \phi_1, \phi_2}$, where for $0 \leq i \leq 2$, each $\phi_i$ is a singleton
    \begin{equation}
        \phi_i =
        \set{
            \begin{bmatrix}
                * & * & * & (i-1) & * & i & * & *     & * \\
                t & e & e & e     & e & e & t & t     & t \\
                * & * & i & i     & i & i & * & (i+1) & *
            \end{bmatrix}
        }
    \end{equation}
    using the matrix representation of \cref{note.matrix}, where the first row corresponds to the variable $c_i$, the second to $p_i$, and the third to $c_{i+1}$.
    For example, the command $\phi_1$ contains the single trace corresponding to the following linear sequence of events:
    \begin{enumerate}
        \item $p_1$ becomes hungry (rule 3);
        \item $p_1$ picks up the right chopstick (rule 4);
        \item $p_0$ picks up the left chopstick (rule 5);
        \item $p_0$ puts down the left chopstick (rule 5);
        \item $p_1$ picks up the left chopstick (rule 6);
        \item $p_1$ eats and puts down both chopsticks (rule 7);
        \item $p_2$ picks up the right chopstick (rule 2).
        \item $p_2$ puts down the right chopstick (rule 2).
    \end{enumerate}

    Clearly, the specification is legal according to the protocol described above, and enables each philosopher to eat their meal.
    Moreover, the specification is locally consistent; we have for each $i$,
    \begin{equation}
        \proj{\phi_i}{\set{c_i}}
        =\set{
            \begin{bmatrix}
                * & (i-1) & * & i & *
            \end{bmatrix}}
        =
        \proj{\phi_{i-1}}{\set{c_i}}
    \end{equation}

    On the other hand, it is intuitively clear that the specification cannot be globally consistent, because
    \begin{enumerate}
        \item $\phi_0$ says $p_0$ picks up $c_1$ before $p_2$ picks up $c_0$,
        \item $\phi_2$ says $p_2$ picks up $c_0$ before $p_1$ picks up $c_2$,
        \item $\phi_1$ says $p_1$ picks up $c_2$ before $p_0$ picks up $c_1$,
    \end{enumerate}
    and together these events form a causal loop, which is physically impossible, and moreover, not representable as a trace on $V = U_0 \cup U_1 \cup U_2$.
    This can be calculated formally using~\cref{eq.cheaper}, but we omit the details for reasons of space.

    This example illustrates that global consistency of a specification is an important criterion for correctness.
\end{example}

\section{Conclusion}

We have presented a refinement lattice of specifications to model distributed programs, using mathematical structures that emphasise the intrinsic topological structure of the distributed system.
The specifications in our lattice consist of subpresheaves of \emph{relative} traces, for which the absolute timing of events is not preserved under restriction maps, but only their relative ordering.
This aspect was emphasised to reflect fundamental physical constraints on synchronisation---at high speeds, such as those of modern computer technology, Einstein has taught us that the idea of synchronous events loses its meaning.
This structure of relative traces then revealed an interesting correctness criterion for specifications, related to Lamport's definition of \emph{sequential consistency}.

\subsubsection{Acknowledgements.}

Nasos is grateful to his PhD advisor Larissa Meinicke for helpful feedback; to the Category Theory Zulip community\footnote{\url{https://categorytheory.zulipchat.com}} for many helpful conversations, and especially to Amar Hadzihasanovic who suggested the proof of Lemma 3, and also for the support of the Australian Government Research Training Program Scholarship.
This research was supported by Discovery Grant DP190102142 from the Australian Research Council (ARC).
We thank the anonymous reviewers for their helpful comments and suggestions.
\ifarxiv
    \appendix

\section{Proofs}
\label{app.proofs}

\subsection{Proof of~\cref{lem.nondegen_functor}}
\label{lem.nondegen_functor_proof}

\begin{proof}
    Clearly $\fun{D}1_X = 1_{\fun{D}X}$.

    Let $\alpha : S \to S'$, $\beta : S'\to S''$ be morphisms of augmented simplicial sets and let $x \in \fun{D} S$.
    Let $z \defeq (\fun{D}\alpha)x$, $y \defeq (\fun{D}(\beta \circ \alpha))x$, and $y' \defeq (\fun{D} \beta)z$.
    Then by the definition of $\fun{D}$ and~\cref{ch1.lem:eilenberg-zilber}, there exist unique $f,g,h \in \deltad$ such that
    \begin{align}
        \contra{f} y  & = \beta(\alpha x) \\
        \contra{g} z  & = \alpha x        \\
        \contra{h} y' & = \beta z
    \end{align}
    Now we have
    \begin{align}
        \contra{f} y
         & = \beta(\alpha x)            \\
         & = \beta(\contra{g} z)        \\
         & = \contra{g} (\beta z)       \\
         & = \contra{g} (\contra{h} y') \\
         & = \contra{(h \circ g)} y'
    \end{align}
    where we used naturality of $g$.
    Since $h \circ g$ is surjective (as a composite of surjections), by uniqueness we conclude $y = y'$.
    Since $x$ was arbitrary, this verifies that $\fun{D}$ is a functor.
\end{proof}

\subsection{Proof of~\cref{thm.covers}}
\begin{proof}
    \label{thm.covers_proof}
    Antichains on a finite poset are well known\footnote{This is a straightforward consequence of the \emph{fundamental theorem of distributive lattices}~\cite[pp. 104–112]{MR2868112}.} to form a complete distributive lattice with the operations~\cref{eq.antichain_meet,eq.antichain_join}, so if $\maxcov{\cat{D}}$ is a lattice under the ordering it must have the same meet and join, and we only need to check it is closed under those operations.

    If $\cat{U},\cat{W}$ are maximal covers and $x \in X$ we know there is a $U \in \cat{U}, V \in \cat{W}$ with $x \in U$ and $x \in V$.

    For the join, if $U,V$ are incomparable then $U \in \cat{U} \lor \cat{W}$ and so $x \in \bigcup{(\cat{U} \lor \cat{W})}$, so we have $ X = \bigcup{(\cat{U} \lor \cat{W})}$ since $x$ was arbitrary.
    If $U, V$ are comparable, then the larger one is in $\cat{U} \lor \cat{W}$ and again $ X = \bigcup{(\cat{U} \lor \cat{W})}$.

    For the meet, we have $x \in U \cap V$ and $U \cap V \in L_\cat{U} \cap L_\cat{W}$.
    If there is no $Z \supset U \cap V$ in $L_\cat{U} \cap L_\cat{W}$, then we have $U \cap V \in \cat{U} \land \cat{W}$ and therefore $x \in \bigcup{(\cat{U} \land \cat{W})}$ and $X = \bigcup{(\cat{U} \land \cat{W})}$.
    Otherwise, let $Z$ be the maximal element of $L_\cat{U} \cap L_\cat{W}$ above $U \cap V$, which must exist because $L_\cat{U} \cap L_\cat{W}$ is a finite set. Then $Z \in \cat{U} \land \cat{W}$ and so $x \in \bigcup{(\cat{U} \land \cat{W})}$ and $X = \bigcup{(\cat{U} \land \cat{W})}$.
\end{proof}

\subsection{Proof of~\cref{thm.concurrency_tuple}}
\label{thm.concurrency_tuple_proof}

\begin{lemma}
    \label{lem.degen_extension}
    Let $\deltad$ be the subcategory of $\deltaplus$ with only surjections; equivalently, the category whose
    \begin{itemize}
        \item objects are finite linear orders $\lin{n} = \set{0 \leq \cdots \leq n}$ for $n \in \Nn$, and
        \item morphisms are surjective monotone functions.
    \end{itemize}
    For each $n$, consider the poset $\cat{Deg}_n$ whose
    \begin{itemize}
        \item objects are functions $f : \lin{m} \to \lin{n}$ in $\deltad$ with target $\lin{n}$, and
        \item for $f, g \in \cat{Deg}_n$, $f \leq g$ if and only if $f$ factors through $g$, i.e. $f = g \circ h$ for some surjection $h \in \deltaplus$.
              (This is the poset reflection of the slice category $\deltad/\lin{n}$.)

              Then for all $n$, the poset $\cat{Deg}_n$ has binary meets; that is, for all $f$ and $g$, there is a \emph{greatest} $f \land g$ factoring through $f$ and $g$, in the sense that
              \begin{enumerate}
                  \item $f \land g \leq f$ and $f \land g \leq g$, and
                  \item for any $h$ with $h \leq f$ and $h \leq g$ we also have $h \leq f \land g$.
              \end{enumerate}
    \end{itemize}
\end{lemma}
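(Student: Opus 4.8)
The plan is to reduce the whole question to elementary arithmetic of tuples of positive integers by encoding each object of $\cat{Deg}_n$ by its vector of fibre sizes. First I would observe that a monotone surjection $f : \lin{m} \to \lin{n}$ is completely determined by the tuple $(a_0, \dots, a_n)$ of cardinalities $a_i \defeq \abs{f^{-1}(i)}$, since the fibres $f^{-1}(i)$ are forced to be the consecutive intervals partitioning $\lin{m}$ in increasing order; here each $a_i \geq 1$ (surjectivity) and $\sum_i a_i = m+1$. Conversely every such tuple of positive integers arises from a unique $f$, so the assignment $f \mapsto (a_0, \dots, a_n)$ is a bijection between the objects of $\cat{Deg}_n$ and the set $(\Nn_{\geq 1})^{n+1}$ of positive-integer vectors of length $n+1$.

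The key step is to translate the factoring order into this encoding: I claim that, writing $f \mapsto (a_i)$ and $g \mapsto (b_i)$, we have $f \leq g$ in $\cat{Deg}_n$ if and only if $a_i \geq b_i$ for every $i$. For the forward direction, if $f = g \circ h$ with $h$ a surjection, then $f^{-1}(i) = h^{-1}(g^{-1}(i))$ is the $h$-preimage of a $b_i$-element set; since $h$ is a surjection each of those $b_i$ points has nonempty fibre, so $a_i$ is a sum of $b_i$ positive integers and hence $a_i \geq b_i$. For the converse, given $a_i \geq b_i$ I would build the required surjection $h$ explicitly: on the block $g^{-1}(i)$ (of size $b_i$) distribute the $a_i$ source elements into $b_i$ nonempty consecutive groups, which is possible precisely because $a_i \geq b_i$; gluing these choices over all $i$ yields a monotone surjection $h$ with $g \circ h = f$. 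This backward construction is the only place requiring care, and is the main obstacle.

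With the order identified, the poset $\cat{Deg}_n$ is isomorphic to $(\Nn_{\geq 1})^{n+1}$ equipped with the reverse of the componentwise order. Antisymmetry is then immediate ($a_i \geq b_i$ and $b_i \geq a_i$ force $(a_i) = (b_i)$, hence $f = g$), confirming $\cat{Deg}_n$ is genuinely a poset. Finally I would read off the meet: the lower bounds of $f \mapsto (a_i)$ and $g \mapsto (b_i)$ are exactly the tuples $(d_i)$ with $d_i \geq a_i$ and $d_i \geq b_i$, i.e. $d_i \geq \max(a_i, b_i)$, and the greatest such lower bound (greatest in the $\cat{Deg}_n$ order, hence componentwise smallest) is $(\max(a_i, b_i))_i$. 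Since each $\max(a_i,b_i) \geq 1$ this tuple corresponds to an actual surjection, so I would define $f \land g$ to be it and verify conditions (1) and (2) directly from the characterization of $\leq$, which at this point are restatements of the two defining properties of the componentwise maximum.
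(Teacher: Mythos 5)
Your proof is correct and follows essentially the same route as the paper's: encode each surjection by its tuple of fibre sizes, observe that factoring through corresponds to componentwise domination of these tuples, and take the meet to be the componentwise maximum. Your explicit iff-characterisation of the order (making $\cat{Deg}_n$ isomorphic to positive-integer vectors under the reverse componentwise order) is a slightly cleaner packaging, but the substance is identical.
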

\begin{proof}
    Any $f : \lin{m} \to \lin{n}$ can be represented as a tuple $(f_0, \ldots, f_n)$ where $f$ sends the first $f_0$ elements to $0$, the next $f_1$ to $1$, etc., and where $\sum_{i=0}^n f_i = m + 1$.
    (For example, the function $f : \lin{3} \to \lin{1}$ given by $(0,1,2,3) \mapsto (0,0,0,1)$ is represented by the tuple $(3,1)$.)

    Given $f: \lin{m_1} \to \lin{n}$ and $g: \lin{m_2} \to \lin{n}$, represent them as tuples $(f_0, \ldots, f_n)$ and $(g_0, \ldots, g_n)$ in this way.

    Let $f \land g: \lin{m} \to \lin{n}$ correspond to the tuple $(\max(f_0, g_0), \ldots, \max(f_n, g_n))$, where $m \defeq -1 + \sum_i \max(f_i, g_i)$.

    Then $f \land g$ factors through $f$ (possibly non-uniquely): take any list of $n+1$ surjective functions
    \begin{equation}
        f'_i :
        \lin{\max(f_i, g_i) - 1 } \to \lin{f_i - 1}
    \end{equation}
    for $0 \leq i \leq n$, and concatenate their tuple representations together to get a surjective function $f': \lin{m} \to \lin{m_1}$ that is easily verified to satisfy $f \land g = f \circ f'$.
    Similarly, $f \land g$ factors through $g$ (possibly non-uniquely).

    Let $h: \lin{m'} \to \lin{n}$ be any other map that factors through both $f$ and $g$, and is represented by the tuple $(h_0, \ldots, h_n)$.
    Then it is straightforward to see that $h_i \geq f_i$ and $h_i \geq g_i$ for all $0 \leq i \leq n$ (this is essentially because the \enquote{size} of inverse images of an element can only increase by precomposition with a surjective map).

    It follows that $h_i \geq \max(f_i, g_i)$ for all $i$. Then taking any list of $n+1$ surjective maps $\lin{h_i - 1} \to \lin{\max(f_i, g_i) - 1}$ and concatenating them together, we get a map $k: \lin{m'} \to \lin{m}$ for which $h = (f \land g) \circ k$, and this completes the proof.
\end{proof}

So what~\cref{lem.degen_extension} implies is that if $f^* z$ and $g^* z$ are degeneracies of the same nondegenerate cell $z$, ${(f \land g)}^* z$ is the “smallest unifier” in this sense: if $h^* z$ is a degeneracy of both $f^* z$ and $g^* z$, then it is a degeneracy of ${(f \land g)}^* z$.

\begin{lemma}
    \label{lem.chaos_flasque}
    The functor $\chaos$ is \emph{flasque}, i.e. for all $U, W \in \cat{D}$ with $U \subseteq W$, the map $\chaos W \to \chaos U$ is surjective.
\end{lemma}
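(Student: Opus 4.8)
The plan is to unwind the restriction map of $\chaos = \fun{D} \circ \nerve \circ \om$ into a concrete operation on chains, and then to exhibit, for each nondegenerate cell of $\chaos U$, an explicit preimage in $\chaos W$ built by padding with constant coordinates, using crucially that the state prosets $\om_v$ are nonempty.

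First I would describe the restriction map $\chaos W \to \chaos U$ induced by $U \subseteq W$ explicitly. By definition it is $\fun{D}$ applied to the morphism $\nerve(\om(U \subseteq W))$, where $\om(U \subseteq W) : \om W \to \om U$ is projection of tuples onto the coordinates in $U$; write $\pi$ for this projection. Its image under $\nerve$ sends a chain $x : \lin{n} \to \om W$ to the composite $\pi \circ x : \lin{n} \to \om U$, and then $\fun{D}$ replaces $\pi \circ x$ by the unique nondegenerate cell generating it (\cref{ch1.lem:eilenberg-zilber}), which concretely is the chain obtained from $\pi \circ x$ by deleting repeated adjacent entries. So the restriction of a nondegenerate $W$-trace is its coordinatewise projection to $U$, with adjacent repeats collapsed.

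Next I would construct the lift. Fix a nondegenerate cell $t \in \chaos U$, i.e. a chain $t = (t_0 \leq \cdots \leq t_n)$ in $\om U$ with no repeated adjacent elements (the empty chain $\unit$ being the trivial case). Since each $\om_v$ is nonempty, choose a state $c_v \in \om_v$ for every $v \in W \setminus U$, and define $s = (s_0, \ldots, s_n)$ in $\om W = \om U \times \prod_{v \in W \setminus U} \om_v$ by giving $s_i$ the $U$-coordinates of $t_i$ together with the constant $(W \setminus U)$-coordinates $(c_v)_v$. Because the componentwise order on $\om W$ compares the $U$-coordinates via $t$ while the remaining coordinates are constant, $s$ is monotone, hence a cell of $\nerve(\om W)$; and it is nondegenerate, since $s_i = s_{i+1}$ would force $t_i = t_{i+1}$ upon equating $U$-coordinates, contradicting nondegeneracy of $t$. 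Finally $\pi \circ s = t$ by construction, and as $t$ already has no adjacent repeats the Eilenberg--Zilber collapse acts as the identity on it, so $\restrict{s}{U} = t$. Since $t$ was arbitrary, this proves surjectivity.

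The only delicate point is the first step: correctly identifying the action of $\fun{D}(\nerve(\pi))$, namely that postcomposition with the projection may create adjacent repeats which $\fun{D}$ then deletes. Once this is pinned down, the lift is chosen precisely so that \emph{no} collapsing occurs, making the verification routine. The essential hypothesis is the nonemptiness of the $\om_v$: without it $\om W$, and hence $\chaos W$, could degenerate to the single empty chain $\unit$, and surjectivity onto a nontrivial $\chaos U$ would fail.
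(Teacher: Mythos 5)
Your proof is correct and takes essentially the same route as the paper's: lift a nondegenerate $U$-trace to a $W$-trace by filling in the coordinates over $W \setminus U$ with states that exist by nonemptiness of the $\om_v$, and observe that the lift is still nondegenerate and restricts back to the original trace. The only cosmetic difference is that you check nondegeneracy of the lift directly coordinatewise (via the constant padding), whereas the paper argues by contraposition using naturality of the restriction maps; both are sound, and your explicit unwinding of $\fun{D}(\nerve(\pi))$ as \enquote{project, then collapse adjacent repeats} matches the paper's setup.
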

\begin{proof}
    Assume $x$ is a trace in $\chaos U$ of length $n$ and let $U \subseteq W$, for $U, W \in \cat{D}$.
    Let $y \in (\nerve \circ \om) W $ be an $n$-cell such that $\rho^W_U{y} = x$, and outside $U$, $y$ takes any values whatsoever---such a $y$ clearly exists because by assumption $\Omega_v$ is nonempty for each $v \in \dd y$.
    Note that restrictions commute with degeneracy maps by naturality, i.e. for any $f \in \deltad$ and $x \in (\nerve \circ \om)W$,
    \begin{equation}
        \contra{f} (\rho^W_U x)
        = \rho^W_U{(\contra{f}x)}
    \end{equation}
    where $\rho^W_U : (\nerve \circ \om)W \to (\nerve \circ \om)U$ denotes the restriction map corresponding to the functor $\nerve \circ \om$.
    Since $x = \rho^W_U y$ is nondegenerate by assumption, it then follows by contraposition that $y$ is nondegenerate, and so $y \in \chaos W$ and $\restrict{y}{U} = x$.
    Since $U, W$ are arbitrary, this shows that $\chaos$ is flasque.
\end{proof}

\begin{proof}[\cref{thm.concurrency_tuple}]
    Axioms (T1)--(T3) are immediate from functoriality of $\chaos$, and (T5) is clearly equivalent to~\cref{lem.chaos_flasque}, so it remains only to show (T4).

    Let $U \defeq \dd x$, $W \defeq \dd y$, and $w \defeq \restrict{x}{U \cap W} = \restrict{y}{U \cap W} $.

    First assume $U \cap W = \emptyset$.
    Then we necessarily have $w = \unit$.
    Let $n_x$ be the length of $x$ and $n_y$ the length of $y$.
    If $n_x = n_y$, then $z \defeq (x,y) \in \chaos (U \sqcup W)$ is clearly a nondegenerate cell that satisfies the condition for (T4).
    Otherwise, assume without loss of generality $n_x < n_y$.
    Then letting $z \defeq (\contra{f} x, y)$ where $f$ is any surjection $\lin{n_y} \to \lin{n_x}$, we have again that $z$ is a nondegenerate cell satisfying (T4).

    Now assume $U \cap W \neq \emptyset$.
    Let $\rho^X_Y : (\nerve \circ \om)X \to (\nerve \circ \om)Y$ denote the restriction maps corresponding to the functor $\nerve \circ \om$ for any $X, Y \in \cat{D}$.
    Now let $w_x \defeq \rho^U_{U \cap W}x$ be the (possibly degenerate) restriction of $x$ with respect to the functor $\nerve \circ \om$, and $w_y \defeq \rho^W_{U \cap W} y$ the corresponding (possibly degenerate) restriction of $y$, so there exist unique surjections $f_x,f_y$ with
    \begin{equation}
        w_x = \contra{f_x} w,\qquad
        w_y = \contra{f_y} w
    \end{equation}
    Let $f_x \land f_y$ be the minimal extension of $f_x$ and $f_y$ as given by~\cref{lem.degen_extension}, so that there exists (possibly not unique) surjections $h_x$ and $h_y$ with
    \begin{equation}
        \contra{h_x} w_x
        = \contra{(f_x \land f_y)} w
        = \contra{h_y} w_y
    \end{equation}

    Now $(\contra{h_x} x, \contra{h_y} y)$ is a cell in $\sub{(\nerve \circ \om)U}{m} \times \sub{(\nerve \circ \om)W}{m}$ for some $m$, that when viewed as a matrix, has the sequence of rows corresponding to $U \cap V$ of both factors coinciding:
    \begin{align}
        \rho^U_{U \cap W}(\contra{h_x} x)
         & = \contra{h_x} (\rho^U_{U \cap W} x) \\
         & = \contra{h_x} w_x                   \\
         & = \contra{(f_x \land f_y)} w         \\
         & = \contra{h_y} w_y                   \\
         & = \contra{h_y} (\rho^W_{U \cap W} y) \\
         & = \rho^W_{U \cap W}(\contra{h_y} y)
    \end{align}
    by naturality.
    By identifying these equal rows, we obtain a cell $z \in \sub{(\nerve \circ \om)(U \cup W)}{m}$ in an evident way.

    Assume there is a surjection $e \in \deltad$ with $z = \contra{e} z'$.
    Then $e$ is a common degeneracy of both $\contra{h_x} x$ and $\contra{h_y} y$,
    and hence of both $\contra{h_x} w_x$ and $\contra{h_y} w_y$, and so there exist unique surjections $q_x, q_y \in \deltad$ for which
    \begin{equation}
        (\contra{h_x} w_x, \contra{h_y} w_y)
        = \contra{e} (\contra{q_x} w_x, \contra{q_y} w_y)
        = (\contra{e} (\contra{q_x} w_x), \contra{e} (\contra{q_y} w_y))
    \end{equation}
    Hence we have that
    \begin{equation}
        f_x \land f_y
        = f_x \circ q_x \circ e
        = f_y \circ q_y \circ e
    \end{equation}
    Because $e$ is surjective, it is right-cancellative, and we conclude
    \begin{equation}
        f_x \circ q_x = f_y \circ q_y \eqdef \bang
    \end{equation}
    as in the below diagram in $\deltad$,
    \[\begin{tikzcd}[sep=huge]
            &&& {\lin{m_x}} \\
            {\lin{l}} && {\lin{m}} && {\lin{n}} \\
            &&& {\lin{m_y}}
            \arrow["{f_x\land f_y}"{description}, from=2-3, to=2-5]
            \arrow["{f_x}"{description}, from=1-4, to=2-5]
            \arrow["{f_y}"{description}, from=3-4, to=2-5]
            \arrow["{h_x}"{description}, from=2-3, to=1-4]
            \arrow["{h_y}"{description}, from=2-3, to=3-4]
            \arrow["{q_x}"{description}, from=2-1, to=1-4]
            \arrow["e"{description}, from=2-3, to=2-1]
            \arrow["{q_y}"{description}, from=2-1, to=3-4]
            \arrow["{!}"{description}, curve={height=30pt}, dotted, from=2-1, to=2-5]
        \end{tikzcd}\]
    for some $l$.

    Now we have
    \begin{equation}
        f_x \land f_y \leq \bang,\qquad
        \bang \leq f_x, \qquad ! \leq f_y
    \end{equation}
    But since $\bang \leq f_x \land f_y$, we conclude $ f_x \land f_y = \bang$, so that $e = 1$ is the identity function, and $z=z'$ is nondegenerate.
    Moreover, $\restrict{z}{U} = x$ and $\restrict{z}{W} = y$, so (T4) is verified.
\end{proof}

\subsection{Proof of~\cref{thm.adjoint_info_alg}}
\label{thm.adjoint_info_alg_proof}

This proof is essentially the same as~\cite[Proposition A.1.]{MR4024458}.
We reproduce it here for convenience, and to show it does not depend on discreteness of the base space.
\begin{proof}
    Let $R \subseteq \fun{T}_{U \lor W}$ where $U, W \in \cat{L}$.
    We have
    \begin{align}
        \proj{R}{U} \otimes \proj{R}{W}
         & = \setc{\projtup{x}{U}}{x \in R}
        \otimes \setc{\projtup{x}{W}}{x \in R} \\
         & = \setc{
            z \in \fun{T}_{U \lor W}
        }{
            \te{x,y \in R}{\projtup{z}{U}=\projtup{x}{U}, \projtup{z}{W}=\projtup{y}{W}}
        }
    \end{align}
    so that clearly, $R \subseteq \proj{R}{U} \otimes \proj{R}{W}$.

    Now, let $R_1 \subseteq \fun{T}_U$ and $R_2 \subseteq \fun{T}_W$.
    Then
    \begin{align}
        \proj{(R_1 \otimes R_2)}{U}
         & = \proj{
            \setc{z \in \fun{T}_{U \lor W}}{
                \projtup{z}{U} \in R_1, \projtup{z}{W} \in R_2
            }
        }{W}                         \\
         & = \setc{\projtup{x}{U} }{
            x \in \fun{T}_{U \lor W},
            \projtup{x}{U} \in R_1, \projtup{x}{W} \in R_2
        }                            \\
         & \subseteq R_1
    \end{align}
    Similarly, $\proj{(R_1 \otimes R_2)}{W} \subseteq R_2$.
\end{proof}

\subsection{Proof of~\cref{thm.compatible}}
\label{thm.compatible_proof}

This proof is based on~\cite[Prop. 6.1]{MR4024458}.
\begin{proof}
    Let $K_{(\fun{A},\cat{U})} \defeq \set{\fun{A} U}_{U \in \cat{U}}$ be the knowledgebase associated to the specification $(\fun{A}, \cat{U})$.
    Let $U,W \in \cat{U}$.
    Then $x \in \restrict{\fun{A} U}{U \cap W}$ if and only if $x \in \fun{A}(U \cap W)$, by flasqueness of $\fun{A}$.
    Similarly, $x \in \restrict{\fun{A} W}{U \cap W}$ if and only if $x \in \fun{A}(U \cap W)$.
    Hence,
    \begin{equation}
        \restrict{\fun{A} U}{U \cap W}
        = \fun{A}(U \cap W)
        = \restrict{\fun{A} W}{U \cap W}
    \end{equation}
    which means $K_{(\fun{A},\cat{U})}$ is compatible.
\end{proof}

\subsection{Proof of~\cref{thm.global}}
\label{thm.global_proof}

This proof is taken verbatim from~\cite[Proposition 5.2]{MR4024458}.
We reproduce it here for convenience, and to show it does not depend on discreteness of the base space.
\begin{proof}
    Suppose $\delta \in \phialg_V$ is a truth valuation for $K$, i.e. $\proj{\delta}{d \phi_i} = \phi_i$ for all $1 \leq i \leq n$.
    Since $\phialg$ is adjoint, we have
    \begin{equation}
        \delta
        \stackrel{(\ref{eq.adj_comb_right})}{\leq} \bigotimes_{i=1}^n \proj{\delta}{d \phi_i}
        = \bigotimes \phi_i
        = \gamma
    \end{equation}
    Moreover, because projection is monotone by axiom (O4) of~\cref{def.ordered_info_alg}, we have
    \begin{equation}
        \phi_i
        \leq \proj{\delta}{d \phi_i}
        \stackrel{\text{(O4)}}{\leq} \proj{\gamma}{d \phi_i}
        \stackrel{(\ref{eq.adj_comb_left})}{\leq} \phi_i
    \end{equation}
    So we conclude $\proj{\gamma}{\dd \phi_i} = \phi_i$ for each $1 \leq i \leq n$, and thus $\gamma$ is a truth valuation for $K$.
\end{proof}

\fi
%
%
%
\bibliographystyle{splncs04}
\bibliography{main.bib}
%





\end{document}